\newcommand{\version}{arxiv}
\newcommand{\ifversion}[2]{\ifthenelse{\equal{\version}{#1}}{#2}{}}
    \newtheorem{prob}{Problem}
    \theoremstyle{definition}
    \newtheorem{prob}{Problem}
    \theoremstyle{plain}
    \newtheorem{definition}{Definition}
    \newtheorem{theorem}{Theorem}
    \newtheorem{lemma}{Lemma}
    \newtheorem{corollary}{Corollary}
    \theoremstyle{remark}
    \newtheorem{remark}{Remark}
    \newtheorem{example}{Example}
    \renewenvironment{proof}[1][\proofname]{\par
        \normalfont\topsep6\p@\@plus6\p@\relax
        \trivlist
        \item\relax
        {\itshape#1\@addpunct{.}}\hspace\labelsep\ignorespaces
    }{%
        \popQED\endtrivlist\@endpefalse
    }
\newcommand{\mtt}[1]{\mathtt{#1}}
\newcommand{\enumbyparen}{\renewcommand{\labelenumi}{{\rm (\arabic{enumi})}}}
\newcommand{\N}{\mathbb{N}}
\newcommand{\Nplus}{\N^{+}}
\newcommand{\floor}[1]{{\lfloor #1 \rfloor}}
\newcommand{\Consts}{{\rm\Sigma}}
\newcommand{\Params}{{\rm\Pi}}
\newcommand{\Char}{{\Consts\cup\Params}}
\newcommand{\String}{{(\Char)^*}}
\newcommand{\substr}[3]{{#1[#2:#3]}}
\newcommand{\match}{\equiv}
\newcommand{\nmatch}{\not\equiv}
\newcommand{\period}[1]{{\mathit{period}(#1)}}
\newcommand{\hasperiod}[3]{{#2 \parallel_{#3} #1}}
\newcommand{\nhasperiod}[3]{{#2 \nparallel_{#3} #1}}
\newcommand{\pcs}[1]{{\Params_{#1}}}
\newcommand{\first}[1]{{\mathit{first}_{#1}}}
\newcommand{\pccount}[1]{{\mathit{count}_{#1}}}
\newcommand{\reach}[1]{{\mathit{reach}_{#1}}}
\begin{document}

\title{Efficient Parameterized Pattern Matching in Sublinear Space}

\ifversion{lncs}{
    \author{Haruki~Ideguchi \and
        Diptarama~Hendrian\orcidID{0000-0002-8168-7312} \and
        Ryo~Yoshinaka\orcidID{0000-0002-5175-465X} \and
        Ayumi~Shinohara\orcidID{0000-0002-4978-8316}}
    
    \authorrunning{H. Ideguchi et al.}

    \institute{Graduate School of Information Sciences, Tohoku University, Sendai, Japan\\
    \email{haruki.ideguchi.q3@dc.tohoku.ac.jp,\\ \{diptarama, ryoshinaka, ayumis\}@tohoku.ac.jp}}
}
\ifversion{arxiv}{
    \author{Haruki~Ideguchi}
    \author{Diptarama~Hendrian}
    \author{Ryo~Yoshinaka}
    \author{Ayumi~Shinohara}
    
    \date{}
    
    \affil{Tohoku University, Japan}
}

\maketitle

\begin{abstract}

The parameterized matching problem is a variant of string matching, which is to search for all \emph{parameterized} occurrences of a pattern $P$ in a text $T$.
In considering matching algorithms, the combinatorial natures of strings, especially \emph{periodicity}, play an important role.
In this paper, we analyze the properties of periods of parameterized strings and propose a generalization of Galil and Seiferas's exact matching algorithm (1980) into parameterized matching, which runs in $O(\pi|T|+|P|)$ time and $O(\log{|P|}+|\Params|)$ space in addition to the input space, where $\Params$ is the parameter alphabet and $\pi$ is the number of parameter characters appearing in $P$ plus one.

\ifversion{lncs}{%
    \keywords{Parameterized matching \and String matching \and Sublinear space \and Combinatorics on words}
}

\end{abstract}

\section{Introduction}

String matching is a problem to search for all occurrences of a pattern $P$ in a text $T$.
Since it is one of the most important computer applications, many efficient algorithms for the problem have been proposed.
Let us denote the length of $T$ and $P$ by $n$ and $m$, respectively.
While a naive algorithm takes $O(nm)$ time to solve the problem, Knuth, Morris, and Pratt~\cite{KMP} gave an algorithm which runs in only $O(n+m)$ time by constructing auxiliary arrays called \emph{border arrays}.
After that, various algorithms to solve the problem in linear time have been proposed, which use auxiliary data structures, such as suffix trees~\cite{SuffixTree}, suffix arrays~\cite{SuffixArray}, LCP arrays~\cite{SuffixArray}.
All of those algorithms outperform the naive algorithm in terms of time complexity.
They require additional space to store their auxiliary data, whose sizes are typically $\Theta(n)$ or $\Theta(m)$.
On the other hand, studies for reducing such extra space were conducted.
Firstly, Galil and Seiferas reduced extra space usage to $O(\log{m})$~\cite{GSlog}, and later several time-space-optimal, $O(n+m)$ time and $O(1)$ extra-space algorithms were devised~\cite{Crochemore,CP,GS}.

In this paper, we consider a variant of string matching: \emph{parameterized matching}.
It is a pattern matching paradigm in which two strings are considered a \emph{match} if we can map some characters (\emph{parameter characters}) in one string to characters in another string.
This paradigm was first introduced by Baker~\cite{p} for use in software maintenance by the ability to detect `identical' computer programs renaming their variables.
For solving the parameterized matching problem, a number of linear-time algorithms have been proposed that extend algorithms for exact matching~\cite{pKMP,pSuffixArray,pPositionHeap,pPositionHeapOnline,pSuffixTree,pDAWG,pExSuffixTrie}.
However, we know of no previous attempt to reduce extra space usage for time-efficient parameterized matching algorithms to sublinear.

The main contribution of this paper is to give a sublinear-extra-space algorithm for the parameterized matching problem by extending Galil and Seiferas's exact matching algorithm~\cite{GSlog}.
It runs in $O(|\pcs{P}|n+m)$ time and $O(\log{m}+|\Params|)$ space in addition to the input space, where $\Params$ is the set of parameter characters and $\pcs{P}$ is the non-empty\footnote{We can assume $\pcs{P}\neq\emptyset$ without loss of generality. See Remark~\ref{remark:pcsnotempty}.} set of parameter characters appearing in $P$.

In order to provide the basis for our algorithm, we also investigate the properties of periodicity of parameterized strings in this paper.
It is widely known that periods of strings are useful for exact matching algorithms~\cite{Crochemore,CP,GSlog,GS,KMP}, which is also the case for parameterized matching~\cite{pKMP}.
We extend previous work on parameterized periods by Apostolico and Giancarlo~\cite{pPeriod} and derive several properties for our algorithm.
In particular, we focus on `sufficiently short' periods of parameterized strings having properties useful for matching algorithms.
Those results contain a parameterized version of Fine and Wilf's periodicity lemma~\cite{period}.

\begin{remark}
    The time and space complexities of our algorithm stated above are based on a computing model in which functions $\Params\to\N$ can be stored as arrays.
    If not, one can use AVL trees~\cite{AVL} instead of arrays to store such functions.
    Then, our algorithm runs in $O((|\pcs{P}|n+m)\log|\pcs{P}|)$ time and $O(\log{m}+|\pcs{P}|)$ extra space.
\end{remark}

\section{Preliminaries}

Let $\N$ and $\Nplus$ be the set of natural numbers including and excluding 0, respectively.
For $x,y\in\N$, we denote by $x\mid y$ that $y$ is a multiple of $x$.

\subsection{Parameterized Matching Problem}

In parameterized matching, we consider two disjoint alphabets: the \emph{constant alphabet} $\Consts$ and the \emph{parameter alphabet} $\Params$.
A string over $\Char$ is called a \emph{parameterized string} or a \emph{p-string}.
Consider a p-string $w\in\String$.
We denote the length of $w$ by $|w|$.
For $0 \le i < |w|$, let us denote $i$-th letter of $w$ by $w[i]$, where the index $i$ is 0-based.
For $0 \le i \le j \le |w|$, we denote the substring $w[i]w[i+1]\cdots w[j-1]$ by $\substr{w}{i}{j}$.
(Note that $\substr{w}{i}{j}$ does not contain $w[j]$.)

We denote the set of permutations of $\Params$ by $S_\Params$.
Throughout this paper, for a permutation $f\in S_\Params$ and a constant character $c\in\Consts$, let $f(c)=c$.
Then, the map $f$ is naturally expanded as a bijection over p-strings: $\String\to\String$.

\begin{definition}[Baker \cite{p}]
    \label{def:match}
    Two p-strings $x$ and $y$ are called a \emph{parameterized-match} or a \emph{p-match} if and only if there exists a permutation $f\in S_\Params$ such that $f(x)=y$.
    Denote this relation by $x\match y$.
\end{definition}

\begin{example}
    Let $\Consts = \{ \mathtt{a}, \mathtt{b}, \mathtt{c} \}$ and $\Params = \{ \mathtt{A}, \mathtt{B}, \mathtt{C} \}$.
    We have $\mathtt{ABaCBCa} \match \mathtt{BCaACAa}$ with a witness $f$ such that $f(\mathtt{A})=\mathtt{B}$, $f(\mathtt{B})=\mathtt{C}$, and $f(\mathtt{C})=\mathtt{A}$.
\end{example}

Clearly, the relation $\match$ is an equivalence relation over $\String$.
Note that if $x\match y$, we have $|x|=|y|$ and $\substr{x}{i}{j}\match \substr{y}{i}{j}$ for any $0\le i\le j\le|x|$.
By this relation, the problem we consider in this paper, the \emph{parameterized matching problem}, is defined as follows.
\begin{prob}[\cite{p}]
    \label{problem}
    Given two p-strings $T$ (text) and $P$ (pattern), find all $0\le i\le|T|-|P|$ such that $\substr{T}{i}{i+|P|}\match P$.
\end{prob}

\subsection{Periodicity of Parameterized Strings}

Periodicity is one of the most fundamental concepts in combinatorics of strings and a wealth of applications.
In exact matching, the Knuth-Morris-Pratt algorithm and various algorithms based on it rely on the properties of periods \cite{KMP,GSlog,GS,CP,Crochemore}.
It is also the case for parameterized matching \cite{pKMP}, where periods of parameterized strings are defined as follows:

\begin{definition}[Apostolico and Giancarlo \cite{pPeriod}]
    \label{def:p}
    Consider $w\in\String$ and $p\in\Nplus$ with $p\le|w|$.
    Then, $p$ is called a \emph{period} of $w$ if and only if $\substr{w}{0}{|w|-p}\match\substr{w}{p}{|w|}$.
\end{definition}

If $p$ is a period of $w$, there exists $f\in S_\Params$ satisfying $f(\substr{w}{0}{|w|-p})=\substr{w}{p}{|w|}$ by definition.
We denote this relation by $\hasperiod{w}{p}{f}$ or simply by $\hasperiod{w}{p}{}$ when $f$ is not specified.

In general, a p-string $w$ can have multiple periods.
We denote the shortest period of $w$ as $\period{w}$.
It is clear that a period $p$ of a p-string $w$ is also a period of any substring $w'$ of $w$ such that $|w'|\ge p$.

\begin{example}
    Let $\Consts = \{ \mathtt{a}, \mathtt{b}, \mathtt{c} \}$ and $\Params = \{ \mathtt{A}, \mathtt{B}, \mathtt{C} \}$.
    For $w := \mathtt{ABaCBCaACAa}$, we have $\hasperiod{w}{4}{f}$ as $\mathtt{ABaCBCa} \match \mathtt{BCaACAa}$ with $f(\mathtt{A})=\mathtt{B}$, $f(\mathtt{B})=\mathtt{C}$, and $f(\mathtt{C})=\mathtt{A}$.
\end{example}

Instead of Definition~\ref{def:pp}, one can use the following equivalent definition for periods, which is a more intuitive representation of the repetitive structure of strings:

\begin{lemma}[\cite{pPeriod}]
    \label{lem:pdisp}
    Consider $w\in\String$, $p\in\Nplus$, and $f\in S_\Params$.
    Then, $\hasperiod{w}{p}{f}$ holds if and only if $w$ can be written as
    \begin{equation*}
        w=f^0(v) \cdot f^1(v) \cdot f^2(v) \cdots f^\floor{\rho}(v) \cdot f^{\floor{\rho}+1}(v'),
    \end{equation*}
    where $\rho=\frac{|w|}{p}$, $v=\substr{w}{0}{p}$ and $v'$ is a prefix of $v$ (allowing the case $v'$ is empty).
\end{lemma}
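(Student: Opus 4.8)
The plan is to route both directions through a single character-level identity. By definition, $\hasperiod{w}{p}{f}$ means $f(\substr{w}{0}{|w|-p})=\substr{w}{p}{|w|}$, and since $f$ acts on a p-string letter by letter, this unfolds into the family of equations $w[i+p]=f(w[i])$ for all $0\le i<|w|-p$. This local form is the natural bridge between the period relation and the claimed repetitive decomposition.

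For the ``only if'' direction I would fix $v=\substr{w}{0}{p}$ and prove by induction on $k$ that $w[kp+r]=f^k(w[r])$ for every $0\le r<p$ and every $k\ge0$ with $kp+r<|w|$. The base case $k=0$ is immediate, and the inductive step is a single application of the local equation, $w[(k+1)p+r]=f(w[kp+r])=f(f^k(w[r]))=f^{k+1}(w[r])$, which is licensed precisely because $(k+1)p+r<|w|$ forces $kp+r<|w|-p$. Reading this identity off block by block, the $k$-th length-$p$ block of $w$ equals $f^k(v)$, and the remaining $|w|-\floor{\rho}p$ trailing letters form $f^{\floor{\rho}}(v')$, where $v'=\substr{w}{0}{|w|-\floor{\rho}p}$ is a prefix of $v$; this is exactly the stated decomposition.

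The ``if'' direction is the same computation in reverse. Given the decomposition, reading the letter at each position yields $w[kp+r]=f^k(v[r])$ uniformly over all whole and partial blocks (using $v'[r]=v[r]$ for $r<|v'|$), and then comparing positions $i$ and $i+p$ for $0\le i<|w|-p$ recovers $w[i+p]=f^{k+1}(v[r])=f(w[i])$, which is precisely $\hasperiod{w}{p}{f}$.

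I expect the only genuine friction to lie in the index bookkeeping at the two ends: the equation $w[i+p]=f(w[i])$ must be invoked only for indices that are truly in range, and the final, possibly empty, tail block $f^{\floor{\rho}}(v')$ must be isolated so that both $p\mid|w|$ (with $v'$ empty) and $p\nmid|w|$ are covered by one argument. Once this boundary accounting is set up correctly, the induction and its converse are entirely mechanical.
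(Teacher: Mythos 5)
Your argument is sound, and note first that there is no in-paper proof to compare against: the paper quotes Lemma~\ref{lem:pdisp} from~\cite{pPeriod} without proof. Your route --- unfolding $\hasperiod{w}{p}{f}$ into the letterwise equations $w[i+p]=f(w[i])$ for $0\le i<|w|-p$, the induction $w[kp+r]=f^k(w[r])$, and the same computation read backwards for the converse --- is the standard argument, and your boundary bookkeeping (invoking the local equation only for in-range indices, and isolating the possibly empty tail) is handled correctly.

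One point must be fixed, however, before your write-up can stand as a proof of the statement \emph{as printed}. The decomposition your induction actually yields is
\begin{equation*}
    w=f^0(v)\cdot f^1(v)\cdots f^{\floor{\rho}-1}(v)\cdot f^{\floor{\rho}}(v'),
    \qquad |v'|=|w|-\floor{\rho}\,p,
\end{equation*}
with full blocks at exponents $0,\dots,\floor{\rho}-1$ and the partial block at exponent $\floor{\rho}$, whereas the printed statement places the full blocks at $0,\dots,\floor{\rho}$ and the tail at $\floor{\rho}+1$. These are not the same, so your closing claim that your decomposition ``is exactly the stated decomposition'' is not literally true. In fact the printed form is unsatisfiable on length grounds alone: its right-hand side has length at least $(\floor{\rho}+1)p>\rho p=|w|$, since $\floor{\rho}+1>\rho$ always. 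The paper's own running example exhibits the same shift: for $w=\mathtt{ABaCBCaACAa}$ and $p=4$ one has $\floor{\rho}=2$ and $w=v\cdot f(v)\cdot f^{2}(v')$ with $v'=\mathtt{ABa}$, the top exponent being $\floor{\rho}$, not $\floor{\rho}+1$. So what you proved is the corrected statement --- equivalently, the printed one with $\rho$ redefined as $\frac{|w|}{p}-1$ --- and you should state this emendation explicitly rather than identify the two formulas; with that caveat your proof is complete.
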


The following lemma has important applications for various matching algorithms.
Particularly, it is used to shift the pattern string safely in the Knuth-Morris-Pratt algorithm and variants~\cite{pKMP,KMP}.

\begin{lemma}
    \label{lem:shift}
    Consider $x,y\in\String$ with $x \match y$.
    For any $0 < \delta < \period{y}$, we have $\substr{x}{\delta}{|x|} \nmatch \substr{y}{0}{|y|-\delta}$.
\end{lemma}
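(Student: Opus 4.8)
The plan is to argue by contradiction and reduce the claimed mismatch to the minimality of $\period{y}$. Suppose, contrary to the statement, that $\substr{x}{\delta}{|x|} \match \substr{y}{0}{|y|-\delta}$ for some $\delta$ with $0 < \delta < \period{y}$. I would aim to show that this assumption forces $\delta$ itself to be a period of $y$, which is impossible: by definition $\period{y}$ is the \emph{shortest} period, yet $\delta < \period{y}$. Note also that $\delta < \period{y} \le |y|$, so $\delta$ lies in the legitimate range $0 < \delta \le |y|$ for which periodicity is defined, making the contradiction meaningful.

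The key step is to exhibit a $p$-match between the two length-$(|y|-\delta)$ blocks of $y$ that witness $\delta$ being a period, namely $\substr{y}{0}{|y|-\delta}$ and $\substr{y}{\delta}{|y|}$. For this I would combine two ingredients. First, from $x \match y$ and the substring-closure property recorded just after Definition~\ref{def:match} (if $x \match y$ then $|x|=|y|$ and $\substr{x}{i}{j} \match \substr{y}{i}{j}$ for all valid $i,j$), instantiated at $i=\delta$, $j=|x|=|y|$, I obtain $\substr{x}{\delta}{|x|} \match \substr{y}{\delta}{|y|}$. Second, the contradiction hypothesis supplies $\substr{x}{\delta}{|x|} \match \substr{y}{0}{|y|-\delta}$. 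Since $\match$ is an equivalence relation, symmetry and transitivity give $\substr{y}{0}{|y|-\delta} \match \substr{x}{\delta}{|x|} \match \substr{y}{\delta}{|y|}$, hence $\substr{y}{0}{|y|-\delta} \match \substr{y}{\delta}{|y|}$. By Definition~\ref{def:p} this says precisely that $\delta$ is a period of $y$, contradicting $0 < \delta < \period{y}$ and finishing the proof.

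I do not expect a serious obstacle here: once the reduction is identified, the argument is a single equivalence chain, and in particular I never need to track the individual witnessing permutations, since the equivalence-relation structure of $\match$ already yields the existence of a witness for the final $p$-match. The real content is conceptual rather than computational, namely recognizing that a ``safe shift by $\delta$'' should be recast as ``$\delta$ is not a period of $y$.'' The only points demanding care are routine bookkeeping: confirming that the two compared substrings have equal length (both $|y|-\delta$, using $|x|=|y|$) and that $\delta$ is a valid period length, both of which follow at once from $0 < \delta < \period{y} \le |y|$.
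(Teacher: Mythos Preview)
Your proposal is correct and matches the paper's own proof essentially line for line: assume $\substr{x}{\delta}{|x|} \match \substr{y}{0}{|y|-\delta}$, combine with $\substr{x}{\delta}{|x|} \match \substr{y}{\delta}{|y|}$ (from $x \match y$) via transitivity to obtain $\hasperiod{y}{\delta}{}$, contradicting the minimality of $\period{y}$. The only cosmetic difference is that the paper phrases it as contraposition rather than contradiction.
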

\begin{proof}
    We give a proof by contraposition.
    Suppose $\substr{x}{\delta}{|x|} \match \substr{y}{0}{|y|-\delta}$.
    Then we have \ifversion{arxiv}{\linebreak}$\substr{y}{0}{|y|-\delta} \match \substr{x}{\delta}{|x|} \match \substr{y}{\delta}{|y|}$, which means $\hasperiod{y}{\delta}{}$.
    Hence, $\delta \ge \period{y}$ holds.
    \qed
\end{proof}

One of the main interest regarding string periodicity is what holds when a string $w$ has two different periods $p$ and $q$.
For ordinary strings, Fine and Wilf's periodicity lemma \cite{period} gives an answer: $\gcd(p,q)$ is also a period when $|w|\ge p+q-\gcd(p,q)$, where $\gcd(p,q)$ is the greatest common divisor of $p$ and $q$.
Apostolico and Giancarlo showed a similar property for parameterized strings.

\begin{lemma}[\cite{pPeriod}]
    \label{lem:p_gcd}
    For $w\in\String$, $p,q\in\Nplus$, and $f,g\in S_\Params$, assume that $\hasperiod{w}{p}{f}$ and $\hasperiod{w}{q}{g}$.
    If $|w|\ge p+q$ and $fg=gf$, we have $\hasperiod{w}{\gcd(p,q)}{}$.
\end{lemma}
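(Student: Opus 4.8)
The plan is to mimic the Euclidean-algorithm proof of the classical Fine and Wilf lemma, but to carry the witness permutations along through each reduction. I would argue by induction on $p+q$. In the base case $p=q$ we have $\gcd(p,q)=p$ and the hypothesis $\hasperiod{w}{p}{f}$ already gives the conclusion, so nothing is needed. For the inductive step I assume, without loss of generality (the hypotheses are symmetric in $(p,f)$ and $(q,g)$), that $p>q$, and I aim to show that $p-q$ is again a period of $w$ witnessed by a permutation that commutes with $g$. I then apply the induction hypothesis to the pair $(p-q,q)$, whose sum $p$ is strictly smaller, whose witnesses commute, and which satisfies the length requirement $|w|\ge(p-q)+q=p$, obtaining $\hasperiod{w}{\gcd(p-q,q)}{}=\hasperiod{w}{\gcd(p,q)}{}$ since $\gcd(p-q,q)=\gcd(p,q)$.

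The heart of the argument is producing the period $p-q$. Unwinding $\hasperiod{w}{p}{f}$ and $\hasperiod{w}{q}{g}$ gives $w[i+p]=f(w[i])$ for $0\le i<|w|-p$ and $w[i+q]=g(w[i])$ for $0\le i<|w|-q$. I would establish $w[i+(p-q)]=h(w[i])$ for every $0\le i<|w|-(p-q)$ by splitting on the position $i$. When $i+p<|w|$, I chain the two relations through position $i+p$: from $w[i+p]=f(w[i])$ and $w[i+p]=g(w[i+(p-q)])$ I obtain $w[i+(p-q)]=(g^{-1}f)(w[i])$. When instead $i+p\ge|w|$, the bound $|w|\ge p+q$ forces $i\ge q$, so I may step backwards through position $i-q$: from $w[i]=g(w[i-q])$ and $w[i+(p-q)]=f(w[i-q])$ I obtain $w[i+(p-q)]=(fg^{-1})(w[i])$. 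Checking that each invoked index lies in the valid range for its period is routine bookkeeping that uses $p>q$ and the case conditions.

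These two ranges partition $\{0,\dots,|w|-(p-q)-1\}$ without overlapping, so nothing internal to the string forces the two expressions $g^{-1}f$ and $fg^{-1}$ to coincide; this is precisely where the commutativity hypothesis enters. Since $fg=gf$ implies $g^{-1}f=fg^{-1}$, both cases deliver the same permutation $h:=fg^{-1}$, establishing $\hasperiod{w}{p-q}{h}$. I expect this reconciliation to be the main obstacle, since without $fg=gf$ the witness could genuinely differ between the two ranges and $p-q$ would fail to be a period in the sense of Definition~\ref{def:p}. To close the induction I must finally verify that $h$ commutes with $g$, which follows from $gfg^{-1}=f$ (again using $fg=gf$), whence $hg=f=gh$; with the length condition and the gcd identity already noted, the induction goes through.
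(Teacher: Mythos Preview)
The paper does not include a proof of this lemma; it is quoted from Apostolico and Giancarlo~\cite{pPeriod} and used as a black box, so there is no in-paper argument to compare against. Your Euclidean-style induction is correct: the case split at position $i+p<|w|$ versus $i+p\ge|w|$ is exhaustive on $\{0,\dots,|w|-(p-q)-1\}$, the index checks in each branch go through exactly as you indicate (in particular $i+p\ge|w|\ge p+q$ gives $i\ge q$, and $i<|w|-(p-q)$ gives $i-q<|w|-p$), and the identification $g^{-1}f=fg^{-1}$ together with $hg=gh$ follows immediately from $fg=gf$. The length condition $|w|\ge(p-q)+q$ and the identity $\gcd(p-q,q)=\gcd(p,q)$ then close the induction. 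This is essentially the argument one finds in the cited source, so your write-up would serve as a faithful self-contained proof should the paper wish to include one.
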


It is known that the length $|w|=p+q-\gcd(p,q)$ is not sufficient for this lemma unlike ordinary strings \cite{pPeriod}.

\section{Properties of Parameterized Periods}

In this section, we show some properties of periods of parameterized strings.
They play an important role in our algorithm presented in Section~\ref{section:algorithm}.

\subsection{Alternative Periodicity Lemma}

The requirements of Lemma~\ref{lem:p_gcd} are slightly different from Fine and Wilf's lemma for ordinary strings.
Particularly, the commutativity of $f$ and $g$ is essential~\cite[Lemma~5]{pPeriod}.
We show in this section a new periodicity lemma for parameterized strings which does not assume the commutativity.

Firstly, we focus on parameter characters contained in a given p-string and its substrings.
For $w\in\String$, we denote by $\pcs{w}$ the set of parameter characters appearing on $w$.

\begin{example}
    Let $\Consts = \{ \mathtt{a}, \mathtt{b}, \mathtt{c} \}$ and $\Params = \{ \mathtt{A}, \mathtt{B}, \mathtt{C} \}$.
    For $w:=\mathtt{ABabAca}$, we have $\pcs{w}=\{ \mathtt{A}, \mathtt{B} \}$.
\end{example}

% \begin{lemma}
%     \label{lem:pdelta}
%     For $w\in\String$, $p\in\Nplus$, and $f\in S_\Params$, assume that $\hasperiod{w}{p}{f}$.
%     Then, the followings are equivalent for any $0<\delta\le|w|-p$ and $g\in S_\Params$:
%     \begin{enumerate}
%         \enumbyparen
%         \item $\hasperiod{w}{(p+\delta)}{g}$.
%         \item $\hasperiod{\substr{w}{0}{|w|-p}}{\delta}{f^{-1}g}$.
%         \item $\hasperiod{\substr{w}{p}{|w|}}{\delta}{gf^{-1}}$.
%     \end{enumerate}
% \end{lemma}

% \begin{proof}
%     We only prove $(1)\iff(2)$.
%     $(1)\iff(3)$ is proved in the same way.
%     \begin{align*}
%         \hasperiod{w}{(p+\delta)}{g} \hspace{2mm}
%         \iff\quad & g(\substr{w}{0}{|w|-p-\delta}) = \substr{w}{p+\delta}{|w|} \\
%         \iff\quad & f^{-1}g(\substr{w}{0}{|w|-p-\delta}) = f^{-1}(\substr{w}{p+\delta}{|w|}) \\
%         \iff\quad & f^{-1}g(\substr{w}{0}{|w|-p-\delta}) = \substr{w}{\delta}{|w|-p} \\
%         \iff\quad & \hasperiod{\substr{w}{0}{|w|-p}}{\delta}{f^{-1}g} \\[-3.6em]
%     \end{align*}
%     \qed
% \end{proof}

\begin{lemma}
    \label{lem:chars}
    Consider $w\in\String$ and any of its substrings $w'$ and $w''$.
    Then, the followings hold:
    \begin{itemize}
        \item If $|w'|\ge \period{w}\cdot(|\pcs{w}|-1)$, we have $|\pcs{w'}|\ge|\pcs{w}|-1$.
        \item If $|w''|\ge \period{w}\cdot|\pcs{w}|$, we have $\pcs{w''}=\pcs{w}$.
    \end{itemize}
\end{lemma}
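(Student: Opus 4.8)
The plan is to exploit the block structure given by Lemma~\ref{lem:pdisp}. Write $p=\period{w}$ and fix $f\in S_\Params$ with $\hasperiod{w}{p}{f}$. Unwinding $\hasperiod{w}{p}{f}$ position by position gives the identity $w[j]=f^{\floor{j/p}}(w[j\bmod p])$ for every $0\le j<|w|$. Thus each residue class $r\in\{0,\dots,p-1\}$ modulo $p$ (a ``column'') carries the characters $f^{0}(a_r),f^{1}(a_r),\dots$, where $a_r:=w[r]$; if $a_r\in\Consts$ the column is constant and irrelevant, while if $a_r\in\Params$ the column lists consecutive iterates of $a_r$ along the $f$-orbit of $a_r$, of some length $\ell_r$. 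I will write $A_r\subseteq\pcs{w}$ for the set of distinct parameters occurring in column $r$ of $w$; it is a contiguous arc of that orbit, and $\pcs{w}=\bigcup_r A_r$.

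The one counting fact I need is that a substring of length $L$ meets every column in at least $\floor{L/p}$ consecutive positions, hence contains at least $\floor{L/p}$ consecutive orbit-iterates in each column. So if $w''$ has $|w''|\ge\period{w}\cdot|\pcs{w}|$, each of its columns contains a run of at least $|\pcs{w}|$ consecutive iterates. To prove $\pcs{w''}=\pcs{w}$ (the inclusion $\subseteq$ being trivial), fix $b\in\pcs{w}$ and a column $r$ in which $b$ occurs, so $b\in A_r$. If $\ell_r\le|\pcs{w}|$, then a run of $\ge|\pcs{w}|\ge\ell_r$ consecutive iterates sweeps the whole orbit and in particular hits $b$. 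Otherwise $\ell_r>|\pcs{w}|$, so column $r$ cannot wrap around and $|A_r|\le|\pcs{w}|$; since $w''$ meets column $r$ in at least $|\pcs{w}|\ge|A_r|$ positions but cannot exceed the $|A_r|$ positions column $r$ has in $w$, it must cover the entire column, which already contains $b$. Either way $b\in\pcs{w''}$.

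For the first statement the same setup applies with the weaker count $\floor{|w'|/p}\ge|\pcs{w}|-1$, so now a column of $w'$ may fall one short of its full arc $A_r$, missing a single endpoint iterate. I will show at most one parameter of $\pcs{w}$ is absent from $w'$. Suppose $b$ is absent. Running the argument above with $|\pcs{w}|-1$ in place of $|\pcs{w}|$: the short-orbit case $\ell_r\le|\pcs{w}|-1$ would force $b$ to appear, so every column $r$ containing $b$ has $\ell_r\ge|\pcs{w}|$, whence $|A_r|\le|\pcs{w}|$ with no wraparound; the counting squeeze $|\pcs{w}|-1\le\floor{|w'|/p}\le|A_r|\le|\pcs{w}|$ then pins $w'$ to cover all but exactly one endpoint of column $r$, that endpoint being $b$, and forces $|A_r|=|\pcs{w}|$, i.e.\ $A_r=\pcs{w}$. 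Consequently, if two distinct parameters $b_1,b_2$ were both absent, pick any column $r$ containing $b_1$; then $A_r=\pcs{w}\ni b_2$, so $b_2$ lives in column $r$ as well, and the absence of both would require $w'$ to miss two elements of the arc $A_r$ --- but it misses at most one endpoint. This contradiction yields $|\pcs{w'}|\ge|\pcs{w}|-1$.

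The main obstacle is the bookkeeping in this last step: making precise that a run of $|\pcs{w}|-1$ consecutive iterates inside a column of $w'$ omits at most one of the arc's endpoints, and that the squeeze between $\floor{|w'|/p}\ge|\pcs{w}|-1$ and $|A_r|\le|\pcs{w}|$ forces every column witnessing an absent parameter to carry the full set $\pcs{w}$. Once that is nailed down, the ``shared column'' observation rules out two simultaneous absences. The regime to watch is when all parameters of $w$ lie on a single long $f$-orbit, so that $|A_r|$ is as large as $|\pcs{w}|$; this is exactly the tight case in which one parameter genuinely can be lost, which is why the first bound is stated with the $-1$.
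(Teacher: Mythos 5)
Your column decomposition via Lemma~\ref{lem:pdisp} and the use of the cycle structure of $f$ is the same machinery as the paper's proof, and your argument for the second bullet is correct. But the first bullet contains a step that is false, and it fails exactly in the regime you yourself flag as critical at the end. You claim that once $\ell_r \ge |\pcs{w}|$, column $r$ ``cannot wrap around,'' so that the number of positions $w'$ occupies in that column is bounded above by $|A_r|$. That implication is only valid when $\ell_r > |\pcs{w}|$ (there, wraparound would force $|A_r| = \ell_r > |\pcs{w}|$, contradicting $A_r \subseteq \pcs{w}$ --- this is why your second bullet is fine). When $\ell_r = |\pcs{w}|$, i.e.\ the single-cycle case, wraparound is entirely possible: take $\Params = \{\mathtt{A},\mathtt{B},\mathtt{C}\}$ and $w = \mathtt{ABCABCABC}$, so that $\period{w}=1$ with $f$ the $3$-cycle, $p=1$, and the unique column has nine entries winding repeatedly around an orbit of length $3 = |\pcs{w}|$. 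For $w' = \mathtt{AB}$ (length $p\cdot(|\pcs{w}|-1) = 2$, with $\mathtt{C}$ absent), your squeeze ``$w'$ covers all but exactly one endpoint of column $r$'' is simply false: $w'$ covers two of the nine column positions, and the inequality $\floor{|w'|/p} \le |A_r|$ no longer bounds coverage of the column, since the column has nine positions but only three distinct characters. Consequently the key claim in your endgame --- ``the absence of both would require $w'$ to miss two elements of the arc $A_r$, but it misses at most one endpoint'' --- is unsupported precisely in the tight case.

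The gap is repairable with an idea you already use in the short-orbit case. When wraparound occurs, $A_r$ is the entire orbit of $a_r$; since $A_r\subseteq\pcs{w}$ and $\ell_r\ge|\pcs{w}|$, that orbit has length exactly $|\pcs{w}|$, and a run of at least $|\pcs{w}|-1$ consecutive $f$-iterates inside an orbit of length $|\pcs{w}|$ omits at most one orbit element. So the conclusion ``$w'$ misses at most one element of $A_r$'' survives, only for a different reason than the endpoint argument, and your two-absences contradiction then goes through. This is essentially how the paper organizes the whole proof: it splits globally on whether $\pcs{w}$ forms a single $f$-cycle (where $|\pcs{w}|-1$ consecutive iterates of any one parameter are pairwise distinct and all appear in $w'$) or two or more cycles, each of length at most $|\pcs{w}|-1$ (where every character of $w$ recurs within any window of $|\pcs{w}|-1$ consecutive column entries, giving $\pcs{w}\subseteq\pcs{w'}$). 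Your per-column dichotomy by orbit length is a legitimate reorganization, but the boundary case $\ell_r = |\pcs{w}|$ must be handled by orbit-sweeping, not by the no-wraparound counting squeeze.
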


\begin{proof}
    The case $\pcs{w} = \emptyset$ is trivial.  Suppose $\pcs{w} \neq \emptyset$.
    Let $p := \period{w}$ and $f$ be a permutation of $\Params$ such that $\hasperiod{w}{p}{f}$.
    It suffices to show the lemma for the cases $|w'|=p\cdot(|\pcs{w}|-1)$ and $|w''|=p\cdot|\pcs{w}|$.
    By Lemma~\ref{lem:pdisp}, $w'$ and $w''$ can be written as $w'=v'\cdot f(v') \cdots f^{|\pcs{w}|-2}(v')$ and $w''=v''\cdot f(v'') \cdots f^{|\pcs{w}|-1}(v'')$, where $v'$ and $v''$ are the prefixes of $w'$ and $w''$ of length $p$, respectively.
    Now, we consider the cyclic decomposition of $f$.

    Suppose the characters in $\pcs{w}$ make one cyclic permutation in $f$.
    Let $a$ be any parameter character contained in $v'$.
    Note that $a,f(a),\cdots,f^{|\pcs{w}|-2}(a)$ are all different characters and appear in $w'$.
    Therefore, we have $|\pcs{w'}|\ge|\pcs{w}|-1$.
    The analogous argument shows $|\pcs{w''}| = |\pcs{w}|$.

    Suppose the characters in $\pcs{w}$ make two or more cyclic permutations in $f$.
    Then, those cyclic permutations are all of length $|\pcs{w}|-1$ or less.
    For $0\le i<|w|$, there exists an integer $k$ such that $w[i+kp], w[i+(k+1)p], \cdots, w[i+(k+|\pcs{w}|-2)p]$ are all contained in $w'$.
    Then, those characters can be represented as $f^{k}(w[i]),f^{k+1}(w[i]),\cdots,f^{k+|\pcs{w}|-2}(w[i])$, and by the assumption about $f$, at least one of them is equal to $w[i]$.
    Therefore, we have $w[i]\in\pcs{w'}$.
    Since $i$ is arbitrary, we end up with $\pcs{w}\subseteq\pcs{w'}$, as required.
    \qed
\end{proof}

Now, we show a variant of Lemma~\ref{lem:p_gcd}.
It does not require any assumption on the permutations, in exchange of a stricter requirement for the length of strings.

\begin{lemma}
    \label{lem:p_gcd2}
    Suppose $w\in\String$ with $\pcs{w}\neq\emptyset$ has periods $p$ and $q$.
    If $|w|\ge p+q+\min(p,q)\cdot(|\pcs{w}|-1)$, we have $\hasperiod{w}{\gcd(p,q)}{}$.
\end{lemma}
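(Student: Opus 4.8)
My plan is to reduce the non-commutative statement of Lemma~\ref{lem:p_gcd2} to the commutative Lemma~\ref{lem:p_gcd} by showing that, under the stated length hypothesis, the two period-permutations $f$ and $g$ (with $\hasperiod{w}{p}{f}$ and $\hasperiod{w}{q}{g}$) must in fact commute \emph{on the parameter characters that matter}, namely on $\pcs{w}$. The extra length $\min(p,q)\cdot(|\pcs{w}|-1)$ beyond $p+q$ is exactly the slack that Lemma~\ref{lem:chars} needs to guarantee that a well-chosen substring of $w$ already contains every character of $\pcs{w}$, which is what will let me pin down the behavior of $f$ and $g$.

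First I would assume without loss of generality that $q\le p$, so that $\min(p,q)=q$ and the hypothesis reads $|w|\ge p+q+q\cdot(|\pcs{w}|-1)$. Let $f,g\in S_\Params$ witness $\hasperiod{w}{p}{f}$ and $\hasperiod{w}{q}{g}$. The key structural fact I want is that $f$ and $g$ agree as permutations when restricted to $\pcs{w}$, or more precisely that $fg$ and $gf$ act identically on $\pcs{w}$; then I can replace $f,g$ by permutations that genuinely commute on all of $\Params$ (extending arbitrarily on $\Params\setminus\pcs{w}$, which never appears in $w$) without changing any of the relations $\hasperiod{w}{p}{f}$, $\hasperiod{w}{q}{g}$, and invoke Lemma~\ref{lem:p_gcd} directly to conclude $\hasperiod{w}{\gcd(p,q)}{}$.

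To establish that agreement I would track, for each parameter character $a\in\pcs{w}$, a position where $a$ occurs and chase how $f$ and $g$ transport that position. Concretely, both periodicities let me express $w[i+p]$ in terms of $w[i]$ via $f$ and $w[i+q]$ via $g$ (on the appropriate index ranges). Composing a $p$-shift followed by a $q$-shift versus a $q$-shift followed by a $p$-shift lands on the same index $i+p+q$, so applying the two relations in the two orders gives $f(g(w[i]))=w[i+p+q]=g(f(w[i]))$ \emph{for every $i$ with $0\le i<|w|-p-q$}. The length bound $|w|\ge p+q+q\cdot(|\pcs{w}|-1)$ guarantees a prefix substring $w' = \substr{w}{0}{|w|-p-q}$ of length at least $q\cdot(|\pcs{w}|-1) \ge \period{w}\cdot(|\pcs{w}|-1)$ (using $\period{w}\le q$), so by Lemma~\ref{lem:chars} we have $|\pcs{w'}|\ge|\pcs{w}|-1$. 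Thus $fg$ and $gf$ agree on all but at most one character of $\pcs{w}$; since both are permutations of the finite set $\pcs{w}$ and they coincide everywhere except possibly one point, they must coincide there too, so $fg=gf$ on $\pcs{w}$.

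\textbf{The main obstacle} I expect is the bookkeeping in the index-chasing step: making the two relations $f(w[i])=w[i+p]$ and $g(w[i])=w[i+q]$ hold simultaneously on overlapping enough index ranges to reach $i+p+q$, and being careful that they are statements about the characters actually appearing (so that applying $f$ or $g$ to $w[i]$ is legitimate and constant characters are handled by the convention $f(c)=g(c)=c$). The subtle point is that two permutations of a finite set that agree on all-but-one point automatically agree everywhere, which closes the gap between the ``$|\pcs{w}|-1$'' guaranteed by Lemma~\ref{lem:chars} and the full commutativity on $\pcs{w}$ that Lemma~\ref{lem:p_gcd} requires; I would state that cardinality argument explicitly so the reduction is airtight.
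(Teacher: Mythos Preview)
Your approach is essentially identical to the paper's: normalize $f,g$ outside $\pcs{w}$, use the identity $fg(w[i])=w[i+p+q]=gf(w[i])$ on the prefix $w'=\substr{w}{0}{|w|-p-q}$, apply Lemma~\ref{lem:chars} to get $|\pcs{w'}|\ge|\pcs{w}|-1$, and finish with the pigeonhole fact that two permutations agreeing on all but one point agree everywhere, then invoke Lemma~\ref{lem:p_gcd}.

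One small slip in the ordering: your assertion that ``both $fg$ and $gf$ are permutations of the finite set $\pcs{w}$'' is not automatic for arbitrary witnesses $f,g$, since a priori $f$ need not carry $\pcs{w}$ into $\pcs{w}$. The paper handles this by performing the normalization \emph{first}: it assumes without loss of generality that $f(a)=g(a)=a$ for all $a\in\Params\setminus\pcs{w}$ (legitimate because $\hasperiod{w}{p}{f}$ only constrains $f$ on characters occurring in $\substr{w}{0}{|w|-p}\subseteq\pcs{w}$, and any partial injection into $\pcs{w}$ extends to a bijection of $\pcs{w}$). With that in place, $f,g$---and hence $fg,gf$---genuinely restrict to permutations of $\pcs{w}$, and then your argument (equivalently, the paper's ``$fg$ and $gf$ agree on at least $|\Params|-1$ parameter characters, hence everywhere'') goes through cleanly. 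So just move your ``extend on $\Params\setminus\pcs{w}$'' step to the beginning and make it ``extend by the identity,'' and you match the paper's proof exactly.
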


\begin{proof}
    Let $f$ and $g$ be permutations of $\Params$ such that $\hasperiod{w}{p}{f}$ and $\hasperiod{w}{q}{g}$.
    Without loss of generality, we suppose $f(a)=a$ and $g(a)=a$ for any $a\in\Params\setminus\pcs{w}$.
    By Lemma~\ref{lem:p_gcd}, it suffices to show that $fg=gf$.
    Let $w':=\substr{w}{0}{|w|-p-q}$.
    Then, notice that $fg(w') = f(\substr{w}{q}{|w|-p}) = \substr{w}{p+q}{|w|} = g(\substr{w}{p}{|w|-q}) = gf(w')$, which claims $fg(a)=gf(a)$ for any $a\in\pcs{w'}$.
    Moreover, given $|w'|=|w|-p-q \ge \min(p,q)\cdot(|\pcs{w}|-1) \ge \period{w}\cdot(|\pcs{w}-1|)$, we have $|\pcs{w'}|\ge|\pcs{w}|-1$ by Lemma~\ref{lem:chars}.
    Hence, the permutations $fg$ and $gf$ behave the same for at least $|\Params|-1$ parameter characters.
    This implies $fg=gf$.
    \qed
\end{proof}

\begin{corollary}
    \label{cor:p}
    Suppose $w\in\String$ with $\pcs{w}\neq\emptyset$ has a period $q$.
    If $q\le\frac{|w|}{|\pcs{w}|+1}$, then $\period{w} \mid q$.
\end{corollary}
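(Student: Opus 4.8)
The plan is to reduce the corollary directly to the non-commutative periodicity lemma, Lemma~\ref{lem:p_gcd2}, and then exploit the minimality of $\period{w}$. Write $p := \period{w}$. Since $p$ is the shortest period of $w$ and $q$ is some period, we immediately have $p \le q$, so that $\min(p,q) = p$. The whole argument hinges on verifying that the given hypothesis $q \le \frac{|w|}{|\pcs{w}|+1}$ is strong enough to trigger Lemma~\ref{lem:p_gcd2} for the pair of periods $p$ and $q$.

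First I would check the length requirement. Lemma~\ref{lem:p_gcd2} demands $|w| \ge p + q + \min(p,q)\cdot(|\pcs{w}|-1)$, which with $\min(p,q)=p$ simplifies to $|w| \ge q + p\cdot|\pcs{w}|$. Using $p \le q$ we bound $q + p\cdot|\pcs{w}| \le q + q\cdot|\pcs{w}| = q\cdot(|\pcs{w}|+1)$, and the hypothesis $q \le \frac{|w|}{|\pcs{w}|+1}$ is exactly the statement that $q\cdot(|\pcs{w}|+1) \le |w|$. Chaining these gives the needed inequality, so Lemma~\ref{lem:p_gcd2} applies (the hypothesis $\pcs{w}\neq\emptyset$ is inherited directly).

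With the lemma in hand, I obtain $\hasperiod{w}{\gcd(p,q)}{}$, i.e.\ $\gcd(p,q)$ is a period of $w$. Now I invoke minimality twice: on one side, $\gcd(p,q)$ divides $p$, hence $\gcd(p,q) \le p$; on the other side, $p = \period{w}$ is the shortest period, so every period of $w$ is at least $p$, giving $\gcd(p,q) \ge p$. Therefore $\gcd(p,q) = p$, and since $\gcd(p,q)$ divides $q$ this yields $p \mid q$, that is $\period{w} \mid q$, as claimed.

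I do not anticipate a genuine obstacle here; the work is essentially the arithmetic bookkeeping of the length bound. The one point that deserves care is the ordering $p \le q$, which makes $\min(p,q)=p$ and is what lets the slightly awkward requirement of Lemma~\ref{lem:p_gcd2} collapse into the clean hypothesis of the corollary. If instead $q$ could be smaller than $\period{w}$ the reduction would fail, but that cannot happen by definition of $\period{w}$ as the shortest period, so the argument is self-contained.
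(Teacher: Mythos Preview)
Your proof is correct and follows essentially the same approach as the paper: reduce to Lemma~\ref{lem:p_gcd2} via the chain $p\cdot|\pcs{w}|+q \le q\cdot(|\pcs{w}|+1) \le |w|$ (using $p\le q$), then use minimality of $p=\period{w}$ to force $\gcd(p,q)=p$.
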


\begin{proof}
    Let $p:=\period{w}$.
    By $p\le q\le \frac{|w|}{|\pcs{w}|+1}$, we have $p\cdot|\pcs{w}|+q \le q\cdot(|\pcs{w}|+1) \ifversion{lncs}{\linebreak}\le \frac{|w|}{|\pcs{w}|+1}(|\pcs{w}|+1)=|w|$.
    Hence, we can use Lemma~\ref{lem:p_gcd2} to obtain $\hasperiod{w}{\gcd(p,q)}{}$.
    Then, since $p$ is the smallest period of $w$, we have $\gcd(p,q)\ge p$, which means $\gcd(p,q)=p$ i.e. $p \mid q$, as required.
    \qed
\end{proof}

\subsection{Prefix Periods}

Galil and Seiferas's exact matching algorithm~\cite{GSlog} can be regarded as an extension of the Knuth-Morris-Pratt algorithm~\cite{KMP}.
The main idea of their algorithm is to deal with only periods of pattern prefixes which are `short enough.'
They pointed out that periods shorter than $\frac{1}{k}$ times of the length of the string have useful properties for saving space usage in exact string matching for an arbitrarily fixed $k \ge 3$.
We show in this section that similar properties hold for parameterized strings as well when $k$ is set to be $|\pcs{w}|+2$.
Most part of those properties come from Lemma~\ref{lem:p_gcd2} we proved in the previous section.

\begin{lemma}
    \label{prop:unique}
    Suppose $w\in\String$ has a period $p$.
    If $p\le\frac{|w|}{|\pcs{w}|+1}$, there exists only one character $a\in\Char$ such that $\hasperiod{wa}{p}{}$.
\end{lemma}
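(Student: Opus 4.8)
The plan is to show that the ``next character'' of a prefix with a sufficiently short period is forced uniquely, and the natural approach is to suppose there are two admissible characters and derive a contradiction via the strong periodicity lemma (Lemma~\ref{lem:p_gcd2}) or its Corollary~\ref{cor:p}. First I would set $f$ to be a witnessing permutation for $\hasperiod{w}{p}{f}$, so that $p$ is a period of $w$ witnessed by $f$. A character $a$ satisfies $\hasperiod{wa}{p}{}$ precisely when extending $w$ by $a$ preserves the $p$-periodic structure; by Lemma~\ref{lem:pdisp} (or directly by Definition~\ref{def:p}) this means $a$ must be the image under the periodic structure of the character at position $|w|-p$, i.e. $a$ is determined by comparing $\substr{(wa)}{0}{|wa|-p}$ with $\substr{(wa)}{p}{|wa|}$. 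The intuition is that the periodicity of $w$ already pins down what letter may legally appear in position $|w|$.

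The key steps, in order, would be: (1) Suppose for contradiction that two distinct characters $a$ and $b$ both satisfy $\hasperiod{wa}{p}{}$ and $\hasperiod{wb}{p}{}$. (2) Observe that $wa$ and $wb$ share the common prefix $w$, which has length $|w| \ge p\cdot(|\pcs{w}|+1) > p$, so $p$ remains a ``short'' period relative to $wa$ and $wb$ as well; in particular the hypothesis $p \le \frac{|w|}{|\pcs{w}|+1}$ should transfer to the longer strings up to the bookkeeping of the added single character. (3) Use the witnessing permutations for $wa$ and $wb$ to locate where $a$ and $b$ are forced to sit: each of $a,b$ equals the image of the same earlier character of $w$ under the respective periodic map, and because that earlier character lies well inside $w$ (by the length bound), the two periodic maps must agree there, forcing $a=b$. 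The cleanest route is to argue that both $a$ and $b$ are determined as $f(w[|w|-p])$ for the \emph{same} permutation $f$ governing $w$, since the short period and the length bound make the periodic structure of $w$ rigid enough that any valid one-character extension must respect the already-fixed $f$ on the relevant characters.

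The main obstacle will be step~(3): pinning down that the \emph{single} permutation $f$ witnessing the period of $w$ controls the extension, rather than allowing a different witnessing permutation for $wa$ versus $wb$. In principle $wa$ could have a period $p$ witnessed by some $f_a \neq f$, and likewise $wb$ by $f_b$, so the forced character need not be literally $f(w[|w|-p])$ a priori. To handle this I would invoke the rigidity coming from Lemma~\ref{lem:chars}: since $|w| \ge p\cdot(|\pcs{w}|+1) \ge p\cdot|\pcs{w}|$, the prefix $w$ already contains every parameter character of itself in a $p$-periodic pattern, so the action of any period-$p$ witness on $\pcs{w}$ is uniquely determined on the substring overlap. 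Concretely, the overlap $\substr{w}{p}{|w|}$ has length $|w|-p \ge p\cdot|\pcs{w}|$, so its parameter alphabet equals $\pcs{w}$ by Lemma~\ref{lem:chars}, and the relation $f(\substr{w}{0}{|w|-p}) = \substr{w}{p}{|w|}$ fixes $f$ on all of $\pcs{w}$. Hence any witness for a period-$p$ extension agrees with $f$ on the characters that matter, and the forced value $f(w[|w|-p])$ is common to both $a$ and $b$, giving $a=b$ and completing the argument.
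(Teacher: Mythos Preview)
Your approach is correct and is essentially the paper's own: both arguments invoke Lemma~\ref{lem:chars} on the length-$(|w|-p)$ prefix to conclude that every parameter character of $w$ already occurs in $\substr{w}{0}{|w|-p}$, so any witness $f$ for $\hasperiod{w}{p}{f}$ has $f(w[|w|-p])$ pinned down by the data of $w$ itself, forcing the extension character. The paper phrases this directly---picking an earlier occurrence $i$ of $w[|w|-p]$ in $\substr{w}{0}{|w|-p}$ and writing $a=w[i+p]$---rather than via contradiction, and applies Lemma~\ref{lem:chars} to the domain $\substr{w}{0}{|w|-p}$ rather than the image $\substr{w}{p}{|w|}$ (a cosmetic difference, since the bijection $f$ gives them equal parameter sets).
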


\begin{proof}
    Consider the prefix $w':=\substr{w}{0}{|w|-p}$.
    By $p\le\frac{|w|}{|\pcs{w}|+1}$, we have $|w|-p \ge p|\pcs{w}| \ge \period{w}|\pcs{w}|$.
    By Lemma~\ref{lem:chars}, $\pcs{w'} = \pcs{w}$.
    Therefore, $w[|w|-p]$ already appears in $w'$ as $w[i]=w[|w|-p]$ for some $i < |w|-p$.
    Hence, for any $f$ such that $\hasperiod{w}{p}{f}$, it holds that $\hasperiod{wa}{p}{f}$ if and only if $a = w[i+p]$.
    \qed
\end{proof}

\begin{corollary}
    \label{cor:union}
    Suppose $w\in\String$ has a period $p$.
    For any $\ell\in\Nplus$ with $\ell p\le\frac{|w|}{|\pcs{w}|+1}$, we have $\hasperiod{wa}{p}{} \iff \hasperiod{wa}{\ell p}{}$ for any $a\in\Char$.
\end{corollary}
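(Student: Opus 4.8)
The plan is to prove the biconditional $\hasperiod{wa}{p}{} \iff \hasperiod{wa}{\ell p}{}$ by establishing each direction separately, exploiting the repetitive structure guaranteed by Lemma~\ref{lem:pdisp} together with the uniqueness established in Lemma~\ref{prop:unique}. The key observation is that the hypothesis $\ell p \le \frac{|w|}{|\pcs{w}|+1}$ is strong enough to invoke both Lemma~\ref{prop:unique} and Corollary~\ref{cor:p} not only for the period $p$ but simultaneously for its multiple $\ell p$, since $p \le \ell p$ means the length bound holds for the smaller quantity as well.

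For the forward direction, I would assume $\hasperiod{wa}{p}{f}$ for some $f \in S_\Params$. Since $p$ is a period of $wa$ and $\ell p \le |wa|$, I expect $\ell p$ to be a period of $wa$ with witness $f^\ell$: intuitively, iterating the relation $\hasperiod{wa}{p}{f}$ $\ell$ times shifts by $\ell p$ and composes the permutation $\ell$ times. Concretely, from $f(\substr{(wa)}{0}{|wa|-p}) = \substr{(wa)}{p}{|wa|}$ one derives by induction that $f^\ell(\substr{(wa)}{0}{|wa|-\ell p}) = \substr{(wa)}{\ell p}{|wa|}$, which is exactly $\hasperiod{wa}{\ell p}{f^\ell}$. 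This direction is essentially a routine iteration argument and should not present difficulties.

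The reverse direction is the substantive one. Assuming $\hasperiod{wa}{\ell p}{}$, I need to recover $\hasperiod{wa}{p}{}$. The strategy is to apply Corollary~\ref{cor:p} to $w$ (not $wa$): since $p = \period{w}$ divides any sufficiently short period and $\ell p$ is a period of $w$ with $\ell p \le \frac{|w|}{|\pcs{w}|+1}$, the corollary gives $\period{w} \mid \ell p$, confirming $p$ is a period of $w$ compatible with $\ell p$. The crucial step is then to use Lemma~\ref{prop:unique}: because $p \le \ell p \le \frac{|w|}{|\pcs{w}|+1}$, the length bound for $p$ holds, so there is a \emph{unique} character extending $w$ into a string with period $p$, and I must show this forced character is precisely $a$. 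The argument is that $\hasperiod{wa}{\ell p}{}$ pins down $a = w[|w|-\ell p]$ via the repetition structure, while the period $p$ of $w$ together with $\period{w} \mid \ell p$ forces the same positional identity; matching these two determinations of $a$ yields $\hasperiod{wa}{p}{}$.

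The main obstacle I anticipate is the reverse direction's bookkeeping: carefully tracking which character each period forces at position $|w|$ and verifying that the congruence $\period{w} \mid \ell p$ makes the $p$-periodic prediction and the $\ell p$-periodic prediction coincide. The delicacy is that uniqueness from Lemma~\ref{prop:unique} is stated for the short period $p$, so I must ensure the appropriate length inequality ($|w|-p \ge \period{w}\cdot|\pcs{w}|$, equivalently the hypothesis $p \le \frac{|w|}{|\pcs{w}|+1}$) genuinely applies, and then reduce the claim to showing the single forced extension character agrees across the two periodicities rather than re-deriving the full periodic structure.
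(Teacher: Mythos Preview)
Your forward direction is fine and is exactly what the paper does (via Lemma~\ref{lem:pdisp}).

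The reverse direction has a genuine gap. You write that ``$\hasperiod{wa}{\ell p}{}$ pins down $a = w[|w|-\ell p]$ via the repetition structure.'' For ordinary strings that would be true, but for parameterized strings it is not: if $\hasperiod{wa}{\ell p}{g}$ then the last character satisfies $a = g(w[|w|-\ell p])$, and $g$ need not fix that character. So you have not identified $a$ concretely, and the subsequent plan of ``matching the $p$-periodic prediction with the $\ell p$-periodic prediction'' via positional identities does not go through. You also write ``since $p = \period{w}$,'' which is not part of the hypothesis ($p$ is merely \emph{a} period of $w$); the detour through Corollary~\ref{cor:p} and $\period{w}\mid \ell p$ is neither justified as stated nor needed.

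The clean fix is the paper's: apply Lemma~\ref{prop:unique} not only to $p$ but also to $\ell p$. Since $w$ has period $p$, it also has period $\ell p$ (iterate Lemma~\ref{lem:pdisp}; note $\ell p \le |w|$), and the bound $\ell p \le \frac{|w|}{|\pcs{w}|+1}$ is exactly the hypothesis of Lemma~\ref{prop:unique} for $\ell p$. Hence there exist \emph{unique} characters $a_1$ and $a_2$ with $\hasperiod{wa_1}{p}{}$ and $\hasperiod{wa_2}{\ell p}{}$, respectively. Your forward direction then gives $\hasperiod{wa_1}{\ell p}{}$, so $a_1 = a_2$ by uniqueness, and the biconditional for arbitrary $a$ reduces to $a=a_1 \iff a=a_2$. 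No reference to $\period{w}$ or Corollary~\ref{cor:p} is required.
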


\begin{proof}
    By Lemma~\ref{prop:unique}, the characters $a_1$ and $a_2$ such that $\hasperiod{wa_1}{p}{}$ and $\hasperiod{wa_2}{\ell p}{}$ are unique respectively.
    Then, since $\hasperiod{wa_1}{p}{} \implies \hasperiod{wa_1}{\ell p}{}$ (shown immediately by Lemma~\ref{lem:pdisp}), we get $a_1=a_2$, as required.
    \qed
\end{proof}

Now, we introduce the key concept for our algorithm: \emph{prefix periods}.
This is a natural extension of the one introduced in~\cite{GS} for parameterized strings.
Hereafter in this section, we consider a fixed p-string $w\in\String$ with $\pcs{w}\neq\emptyset$ and let $k:=|\pcs{w}|+2$.

\begin{definition}
    \label{def:pp}
    % Consider $w\in\String$ with $\pcs{w}\neq\emptyset$. Then,
    A positive integer $p\in\Nplus$ is called a \emph{prefix period} of $w$ if and only if there exists a prefix $w'$ of $w$ such that $period(w')=p$ and $p\le\frac{|w'|}{k}$.
\end{definition}

\begin{table}[t]
    \centering
    \caption{
        Let $\Params=\{\mathtt{A},\mathtt{B}\}$.
        A p-string $w:=\texttt{ABABBABAABABBABAABBA}$ has prefix periods 1 and 4.
        Circled numbers in the table below are prefix periods of $w$ with $\substr{w}{0}{i+1}$ as witnesses.
        For instance, 4 is a prefix period of $w$ with $\substr{w}{0}{18}$ as a witness because $\period{\substr{w}{0}{18}}=4$ and $4 \le \frac{|\substr{w}{0}{18}|}{k}$.
        (Note that $k = |\pcs{w}|+2 = 4$.)
    }
    \label{table:pp_example}
    \newcommand{\cellwidth}{0.03\textwidth}
    \ifversion{arxiv}{\setlength{\tabcolsep}{0.005\textwidth}}
    \begin{tabular}{
        |c|
        *{4}{>{\centering\arraybackslash}p\cellwidth}
        ;{0.4mm/0.7mm}
        *{14}{>{\centering\arraybackslash}p\cellwidth}
        ;{0.4mm/0.7mm}
        *{2}{>{\centering\arraybackslash}p\cellwidth}|
    }
        \hline
        $i$
            & 0 & 1 & 2 & 3
            & 4 & 5 & 6 & 7
            & 8 & 9 & 10 & 11
            & 12 & 13 & 14 & 15
            & 16 & 17 & 18 & 19
        \\
        $w[i]$
            & \texttt{A} & \texttt{B} & \texttt{A} & \texttt{B}
            & \texttt{B} & \texttt{A} & \texttt{B} & \texttt{A}
            & \texttt{A} & \texttt{B} & \texttt{A} & \texttt{B}
            & \texttt{B} & \texttt{A} & \texttt{B} & \texttt{A}
            & \texttt{A} & \texttt{B} & \texttt{B} & \texttt{A}
        \\
        $\period{\substr{w}{0}{i+1}}$
            & 1 & 1 & 1 & \Circled{1}
            & 4 & 4 & 4 & 4
            & 4 & 4 & 4 & 4
            & 4 & 4 & 4 & \Circled{4}
            & \Circled{4} & \Circled{4} & 18 & 18
        \\
        \hline
            \multicolumn{5}{r|}{$\reach{w}(1)=4 \,$} &
            \multicolumn{14}{r|}{$\reach{w}(4)=18 \,$} &
            \multicolumn{2}{r}{}
        \\
    \end{tabular}
\end{table}

We give an example of prefix periods in Table~\ref{table:pp_example}.
For a fixed $p$, only prefixes $w'$ of $w$ satisfying $|w'|\ge kp$ can be a witness for $p$ being a prefix period.
We show in the following lemmas that it suffices to consider only one prefix $w'=\substr{w}{0}{kp}$ for checking whether $p$ is a prefix period.

\begin{lemma}
    \label{lem:prest}
    % Consider $w\in\String$ with $\pcs{w}\neq\emptyset$.
    For any $a\in\Char$, if $\period{wa}\neq\period{w}$, we have $\period{wa} > \frac{|w|}{|\pcs{w}|+1}$.
\end{lemma}

\begin{proof}
    We show the lemma by contraposition.
    Suppose $\period{wa} \le \frac{|w|}{|\pcs{w}|+1}$.
    Since $\period{wa}$ is also a period of $w$, we can use Corollary~\ref{cor:p} to obtain \ifversion{lncs}{\linebreak}$\period{w} \mid \period{wa}$.
    Therefore, we get $\hasperiod{wa}{\period{w}}{}$ by Corollary~\ref{cor:union}, which implies $\period{w}\ge\period{wa}$.
    On the other hand, we have $\period{w}\le\period{wa}$ by definition.
    Thus $\period{w}=\period{wa}$ holds.
    \qed
\end{proof}

\begin{lemma}
    \label{lem:ppdef}
    Consider any $0<p\le\frac{|w|}{k}$.
    Then, $p$ is a prefix period of $w$ if and only if $\period{w'}=p$ where $w':=\substr{w}{0}{kp}$.
\end{lemma}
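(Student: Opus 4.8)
The plan is to prove the two directions of the equivalence separately, with the reverse implication being essentially immediate and the forward implication carrying almost all of the work. Throughout I keep the standing notation $w' := \substr{w}{0}{kp}$, and I note first that the hypothesis $p \le \frac{|w|}{k}$ guarantees $kp \le |w|$, so $w'$ is genuinely a prefix of $w$. For the ``if'' direction, suppose $\period{w'} = p$. Since $|w'| = kp$ we get $p = \frac{|w'|}{k}$, so the defining condition $p \le \frac{|w'|}{k}$ of Definition~\ref{def:pp} holds with equality; hence $w'$ itself witnesses that $p$ is a prefix period of $w$, and nothing more is required.

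For the ``only if'' direction, assume $p$ is a prefix period, so that some prefix $u$ of $w$ satisfies $\period{u} = p$ and $|u| \ge kp$. Then $w'$ is a prefix of $u$, and since $p$ is a period of $u$ with $|w'| = kp \ge p$, it is also a period of $w'$, giving $\period{w'} \le p$. More generally, using that the shortest period cannot decrease when a character is appended (the inequality $\period{w}\le\period{wa}$ already invoked in the proof of Lemma~\ref{lem:prest}), every prefix $\substr{w}{0}{j}$ with $j \le |u|$ satisfies $\period{\substr{w}{0}{j}} \le \period{u} = p$. It remains to exclude $\period{w'} < p$, which I will do by showing that the shortest period is in fact \emph{constant} on the range $kp \le j \le |u|$.

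The key step is a ``no-jump'' argument driven by Lemma~\ref{lem:prest}. Fix $j$ with $kp \le j < |u|$ and apply Lemma~\ref{lem:prest} to the prefix $\substr{w}{0}{j}$ together with the appended character $w[j]$: either $\period{\substr{w}{0}{j+1}} = \period{\substr{w}{0}{j}}$, or else $\period{\substr{w}{0}{j+1}} > \frac{j}{|\pcs{\substr{w}{0}{j}}|+1}$. To render the second alternative contradictory I must control the parameter count of these prefixes. Since $\period{u} = p$ and $j \ge kp = (|\pcs{w}|+2)p \ge |\pcs{u}|\cdot p$, the second bullet of Lemma~\ref{lem:chars} (applied with $u$ in the role of the base string) yields $\pcs{\substr{w}{0}{j}} = \pcs{u}$, whence $|\pcs{\substr{w}{0}{j}}| = |\pcs{u}| \le |\pcs{w}|$. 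Consequently $\frac{j}{|\pcs{\substr{w}{0}{j}}|+1} \ge \frac{kp}{|\pcs{w}|+1} = \frac{(|\pcs{w}|+2)p}{|\pcs{w}|+1} > p$, so the second alternative would force $\period{\substr{w}{0}{j+1}} > p$, contradicting the bound $\le p$ established above. Therefore the period never changes across this range, and in particular $\period{w'} = \period{u} = p$, as desired.

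The main obstacle I anticipate is the possible emptiness of the parameter set of these prefixes, which is precisely the hypothesis required by Lemma~\ref{lem:prest} and the underlying Corollary~\ref{cor:p}. Here the identity $\pcs{\substr{w}{0}{j}} = \pcs{u}$ is what keeps the case analysis clean: if $\pcs{u}\neq\emptyset$ the argument above applies verbatim, whereas if $\pcs{u}=\emptyset$ then $u$ and $w'$ are constant strings, the parameterized period coincides with the ordinary period, and the claim reduces to the classical prefix-period fact for ordinary strings (obtainable from Fine and Wilf's lemma). I expect the careful bookkeeping of this parameter-count control, rather than the jump estimate itself, to be the delicate part of the write-up.
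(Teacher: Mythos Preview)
Your proposal is correct and follows essentially the same route as the paper: both arguments iterate Lemma~\ref{lem:prest} to show that the shortest period stays equal to $p$ on every prefix of length between $kp$ and the length of a witness, the paper phrasing this as ``shrink the witness one character at a time'' and you as ``the period is constant on $[kp,|u|]$.'' The only notable difference is that the paper bounds the denominator directly via $|\Pi_u|\le|\pcs{w}|=k-2$ (giving $p<\frac{|v|}{k}\le\frac{|v|}{|\Pi_u|+2}<\frac{|u|}{|\Pi_u|+1}$) without appealing to Lemma~\ref{lem:chars}; your detour through Lemma~\ref{lem:chars} is heavier than needed for that inequality, though it does buy you the nonemptiness of $\pcs{\substr{w}{0}{j}}$ required to invoke Lemma~\ref{lem:prest}---an edge case the paper leaves implicit and that you handle more carefully.
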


\begin{proof}
    $(\!\impliedby\!)\;$
    Immediate by the definition of prefix periods.
    \\
    $(\!\implies\!)\;$
    Let $v$ be a prefix of $w$ that witnesses $p$ being a prefix period, i.e., $|v| \ge kp$ and $\period{v} = p$.
    If $|v| = kp$, we are done.
    Suppose $|v|>kp$ and let $u:=\substr{v}{0}{|v|-1}$.
    Then,
    \(
        \period{v} = p < \frac{|v|}{k} \le \frac{|v|}{|\Pi_u|+2} < \frac{|u|}{|\Pi_u|+1} 
    \).
    By Lemma~\ref{lem:prest}, we have $\period{u}=\period{v}=p$.
    By repeatedly applying this discussion, we can shorten the witness up to length $kp$.
    \qed
\end{proof}

Next, we introduce an auxiliary function $\reach{w}$.

\begin{definition}
    \label{def:reach}
    For any $0<p\le|w|$, let
    \begin{equation*}
        \reach{w}(p) := \max\{r\in\N : r\le|w| \text{ \rm and } \hasperiod{\substr{w}{0}{r}}{p}{}\}.
    \end{equation*}
\end{definition}

Note that $\hasperiod{\substr{w}{0}{r}}{p}{} \iff \reach{w}(p)\ge r$ holds by definition.
Using $\reach{w}$, we get an equivalent definition of prefix periods as follows, which is directly used in our searching algorithm.

\begin{lemma}
    \label{lem:ppreach}
    Consider any $0<p\le\frac{|w|}{k}$.
    Then, $p$ is a prefix period of $w$ if and only if all the followings hold:
    \begin{enumerate}
        \enumbyparen
        \item $\reach{w}(p) \ge kp$,
        \item $\reach{w}(q) < \reach{w}(p)$ for any $0<q<p$.
    \end{enumerate}
\end{lemma}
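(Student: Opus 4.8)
The plan is to start from Lemma~\ref{lem:ppdef}, which (since $0<p\le\frac{|w|}{k}$) lets me replace ``$p$ is a prefix period of $w$'' by the equivalent statement $\period{w'}=p$, where $w':=\substr{w}{0}{kp}$. Recalling the defining equivalence $\hasperiod{\substr{w}{0}{r}}{s}{}\iff\reach{w}(s)\ge r$, the equation $\period{w'}=p$ unfolds into two clauses: first, $p$ is a period of $w'$, i.e.\ $\reach{w}(p)\ge kp$, which is precisely condition~(1); and second, no smaller $q$ is a period of $w'$, i.e.\ $\reach{w}(q)<kp$ for every $0<q<p$. So everything reduces to proving that, \emph{assuming} condition~(1), this minimality clause is equivalent to condition~(2).

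The forward direction is easy and uses no periodicity machinery. Assuming $p$ is a prefix period, condition~(1) is immediate, and minimality of $p$ as $\period{w'}$ says that for each $0<q<p$ the integer $q$ is not a period of $w'$, hence $\reach{w}(q)<kp$; combining this with $\reach{w}(p)\ge kp$ gives $\reach{w}(q)<kp\le\reach{w}(p)$, which is condition~(2).

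The backward direction is the crux. Assuming~(1) and~(2), I suppose for contradiction that $p$ is not the smallest period of $w'$ and set $q:=\period{w'}<p$. The first step is to upgrade this to divisibility $q\mid p$: since $w'$ has period $p$ and $p=\frac{|w'|}{|\pcs{w}|+2}\le\frac{|w'|}{|\pcs{w'}|+1}$ (using $|w'|=kp$ and $|\pcs{w'}|\le|\pcs{w}|$), Corollary~\ref{cor:p} yields $\period{w'}\mid p$, so $p=\ell q$ with $\ell\ge 2$; the degenerate case $\pcs{w'}=\emptyset$, where $w'$ is constant, is handled by applying Lemma~\ref{lem:p_gcd} with the identity permutations, which commute trivially. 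The second step is to show $\reach{w}(q)\ge\reach{w}(p)$, which contradicts~(2) for this $q<p$. I would prove it by induction on the prefix length $m$ from $kp$ up to $r:=\reach{w}(p)$, maintaining the invariant $\hasperiod{\substr{w}{0}{m}}{q}{}$: the base case $m=kp$ holds as $q=\period{w'}$, and for the step $\substr{w}{0}{m+1}$ has period $p=\ell q$ because $m+1\le r$, so since $p=\ell q\le\frac{m}{|\pcs{\substr{w}{0}{m}}|+1}$ (as $m\ge kp$ and $|\pcs{\substr{w}{0}{m}}|\le|\pcs{w}|=k-2$), Corollary~\ref{cor:union} converts ``period $\ell q$'' into ``period $q$''. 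Reaching $m=r$ gives $\reach{w}(q)\ge r=\reach{w}(p)$, the desired contradiction.

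The step I expect to be the main obstacle is exactly this propagation of the finer period $q$ across the entire reach of $p$. Corollary~\ref{cor:union} only extends periodicity by one character, so the argument must be organized as an induction whose length bound $p\le\frac{m}{|\pcs{\substr{w}{0}{m}}|+1}$ is verified at every prefix, and the divisibility $q\mid p$ must be secured beforehand (including the all-constant prefix case) so that $p$ is genuinely a multiple of $q$ and Corollary~\ref{cor:union} is applicable at each stage.
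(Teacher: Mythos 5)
Your proof is correct, but its backward direction takes a genuinely different --- and considerably heavier --- route than the paper's. The forward direction is essentially identical in both. For $(\!\impliedby\!)$, however, the paper never routes through Lemma~\ref{lem:ppdef}: it simply takes $w':=\substr{w}{0}{\reach{w}(p)}$ as the witness prefix. Condition~(2) says precisely that no $q<p$ is a period of this $w'$ (such a $q$ would force $\reach{w}(q)\ge|w'|=\reach{w}(p)$), so $\period{w'}=p$; condition~(1) gives $p\le\frac{|w'|}{k}$; and Definition~\ref{def:pp} concludes immediately that $p$ is a prefix period. By contrast, you commit at the outset, via Lemma~\ref{lem:ppdef}, to the canonical witness $\substr{w}{0}{kp}$, and must then bridge the gap between minimality of the period at length $kp$ and condition~(2), which speaks about reaches possibly far beyond $kp$. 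That gap is exactly what forces your two-step machinery: the divisibility $q\mid p$ via Corollary~\ref{cor:p} (with the all-constant case patched through Lemma~\ref{lem:p_gcd}), and the character-by-character induction with Corollary~\ref{cor:union} dragging the shorter period $q$ across the entire reach of $p$. I checked both steps and they are sound --- the length bound $p\le\frac{m}{|\pcs{\substr{w}{0}{m}}|+1}$ needed at each stage does hold, since $m\ge kp$ and $|\pcs{\substr{w}{0}{m}}|\le k-2$ --- so your argument is a valid, if laborious, proof. What your route buys is a stronger intermediate fact (a shorter period of the length-$kp$ prefix would have to divide $p$ and would have reach at least $\reach{w}(p)$), in the spirit of the period-propagation idea behind Lemma~\ref{lem:prest}; what the paper's choice of witness buys is a two-line argument that needs nothing beyond the definitions of prefix period and reach. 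The lesson is that Definition~\ref{def:pp} permits \emph{any} sufficiently long prefix as a witness, and choosing the one of length $\reach{w}(p)$ rather than $kp$ makes condition~(2) do all the work for free.
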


\begin{proof}
    $(\!\implies\!)\;$
    (1) is by definition. We show (2).
    By Lemma~\ref{lem:ppdef}, $\period{\substr{w}{0}{kp}}=p$.
    Thus, $q < p$ is not a period of $\substr{w}{0}{kp}$, i.e., $\reach{w}(q) < kp \le \reach{w}(p)$ by (1).
    \vspace{0.5mm} \\
    $(\!\impliedby\!)\;$
    Let $w':=\substr{w}{0}{\reach{w}(p)}$.
    (2) implies $\period{w'}=p$ since any $q$ satisfying $0<q<p$ is not a period of $w'$.
    Additionally, we have $p\le\frac{|w'|}{k}$ by (1).
    Thus $p$ is a prefix period of $w$ with $w'$ as a witness.
    \qed
\end{proof}

Galil and Seiferas~\cite[Corollary~1]{GSlog} pointed out that the number of prefix periods of a word $w$ is $O(\log|w|)$.
We show in the following lemma that it is the case for parameterized strings.
It contributes directly to reducing the space complexity of our algorithm.

\begin{lemma}
    \label{lem:ppdouble}
    Suppose $w$ has prefix periods $p$ and $q$.
    If $p<q$, then $2p\le q$.
\end{lemma}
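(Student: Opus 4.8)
The plan is to argue by contradiction: assume $p<q<2p$ and exhibit a single prefix of $w$ that simultaneously carries the periods $p$ and $q$ while being long enough to force $\gcd(p,q)$ to be a period, contradicting the minimality built into the prefix-period condition. First I would translate both hypotheses through $\reach{w}$ via Lemma~\ref{lem:ppreach}. Since $p$ is a prefix period we get $\reach{w}(p)\ge kp$ together with $\reach{w}(q')<\reach{w}(p)$ for every $0<q'<p$; since $q$ is a prefix period and $p<q$, condition (2) for $q$ gives $\reach{w}(p)<\reach{w}(q)$. The natural candidate is then $w^{\ast}:=\substr{w}{0}{\reach{w}(p)}$, the longest prefix still admitting the period $p$.

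Next I would pin down the two periods of $w^{\ast}$. On one hand $\period{w^{\ast}}=p$: the prefix has period $p$ by the defining maximality of $\reach{w}(p)$, and it can have no period $q'<p$, since that would force $\reach{w}(q')\ge\reach{w}(p)$, contradicting the reach inequalities above. On the other hand $w^{\ast}$ also has period $q$, because $\reach{w}(q)>\reach{w}(p)=|w^{\ast}|$ gives $\hasperiod{w^{\ast}}{q}{}$. Thus $w^{\ast}$ is a single string possessing both periods $p$ and $q$.

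The core of the argument is the length check feeding Lemma~\ref{lem:p_gcd2}. With $\min(p,q)=p$ the required bound reads $|w^{\ast}|\ge p+q+p(|\pcs{w^{\ast}}|-1)=q+p\,|\pcs{w^{\ast}}|$, and here the assumption $q<2p$ does the real work: since $|\pcs{w^{\ast}}|\le|\pcs{w}|$ we obtain $q+p\,|\pcs{w^{\ast}}|<2p+p\,|\pcs{w}|=kp\le\reach{w}(p)=|w^{\ast}|$, so Lemma~\ref{lem:p_gcd2} applies and $w^{\ast}$ has period $\gcd(p,q)$. But $\gcd(p,q)$ divides $q-p$ and $0<q-p<p$, so $\gcd(p,q)<p$, contradicting $\period{w^{\ast}}=p$. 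This contradiction yields $2p\le q$, as required.

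The one delicate point — and the main obstacle — is that Lemma~\ref{lem:p_gcd2} presumes $\pcs{w^{\ast}}\neq\emptyset$, whereas a prefix of $w$ can be constant-only even when $\pcs{w}\neq\emptyset$. I would dispatch this degenerate case separately: if $\pcs{w^{\ast}}=\emptyset$ then $w^{\ast}$ is a string over $\Consts$, so $\hasperiod{w^{\ast}}{p}{\mathrm{id}}$ and $\hasperiod{w^{\ast}}{q}{\mathrm{id}}$ hold with the identity permutation, which trivially commutes; since $|w^{\ast}|\ge kp\ge 3p>p+q$, Lemma~\ref{lem:p_gcd} then produces the same period $\gcd(p,q)<p$ and hence the same contradiction. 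Everything else reduces to the routine inequality bookkeeping indicated above.
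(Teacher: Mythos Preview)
Your argument is correct and follows essentially the same route as the paper: assume $p<q<2p$, exhibit a prefix of $w$ that carries both periods with $p$ minimal, invoke Lemma~\ref{lem:p_gcd2} using the bound $q+p\,|\pcs{\,\cdot\,}|<2p+p\,|\pcs{w}|=kp$, and derive the contradiction from $\gcd(p,q)<p$. The only cosmetic differences are that the paper works with the shorter prefix $w':=\substr{w}{0}{kp}$ (appealing to Lemma~\ref{lem:ppdef} rather than Lemma~\ref{lem:ppreach} to get $\period{w'}=p$), and that you explicitly dispatch the degenerate case $\pcs{w^{\ast}}=\emptyset$, which the paper's proof silently passes over.
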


\begin{proof}
    We prove the theorem by contradiction.
    Suppose $p<q<2p$.
    By definition, $\hasperiod{\substr{w}{0}{kp}}{p}{}$ and $\hasperiod{\substr{w}{0}{kq}}{q}{}$ hold.
    Let $w' := \substr{w}{0}{kp}$.
    By Lemma~\ref{lem:ppdef}, $p$ is the shortest period of $w'$.
    Since both $p$ and $q$ are periods of $w'$ and $p\cdot|\pcs{w'}|+q < p\cdot|\pcs{w}|+2p = kp = |w'|$, we get $\hasperiod{w'}{\gcd(p,q)}{}$ by Lemma~\ref{lem:p_gcd2}.
    Hence, we have $\gcd(p,q)\ge  \period{w'} =p$, which claims $\gcd(p,q)=p$ i.e. $p \mid q$.
    However, this contradicts to the assumption $p < q < 2p$.
    \qed
\end{proof}

\begin{corollary}
    \label{cor:ppnum}
    The number of prefix periods of $w\in\String$ is at most $\log_2{|w|}$.
\end{corollary}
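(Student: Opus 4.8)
The plan is to bound the number of prefix periods directly from the doubling property established in Lemma~\ref{lem:ppdouble}. Let me denote the distinct prefix periods of $w$ by $p_1 < p_2 < \cdots < p_t$ in increasing order. By Lemma~\ref{lem:ppdouble}, consecutive prefix periods satisfy $2p_i \le p_{i+1}$, so the sequence grows at least geometrically with ratio $2$.

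First I would observe that every prefix period $p$ must satisfy $p \le \frac{|w|}{k}$ by Definition~\ref{def:pp}, and in particular $p \le |w|$; also every prefix period is at least $1$ since it lies in $\Nplus$. Thus $p_1 \ge 1$. Iterating the inequality $p_{i+1} \ge 2p_i$ gives $p_i \ge 2^{i-1} p_1 \ge 2^{i-1}$. Applying this to the largest prefix period yields $p_t \ge 2^{t-1}$.

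Next I would combine this lower bound with the upper bound $p_t \le |w|$ to get $2^{t-1} \le |w|$, hence $t-1 \le \log_2 |w|$, which already gives $t \le \log_2|w| + 1$. To sharpen this to the claimed bound $t \le \log_2|w|$, I expect the cleaner route is to use the stronger constraint $p_t \le \frac{|w|}{k}$ together with $k = |\pcs{w}|+2 \ge 3$ (since $\pcs{w} \neq \emptyset$). Then $2^{t-1} \le p_t \le \frac{|w|}{k} \le \frac{|w|}{2}$, giving $2^t \le |w|$ and therefore $t \le \log_2 |w|$.

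The step most likely to need care is the edge-case bookkeeping rather than any deep argument: one must confirm that the geometric chain is correctly anchored (the smallest prefix period is at least $1$) and that the factor of $k \ge 3$ is legitimately available, which relies on the standing assumption $\pcs{w} \neq \emptyset$ introduced just before Definition~\ref{def:pp}. Everything else is the routine consequence of exponentiating the doubling inequality and taking logarithms, so the proof should be short.
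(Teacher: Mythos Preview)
Your proposal is correct and follows exactly the route the paper intends: the corollary is stated without proof because it is an immediate consequence of Lemma~\ref{lem:ppdouble}, and your argument---chaining the doubling inequality to get $p_t \ge 2^{t-1}$ and combining it with $p_t \le \frac{|w|}{k}$ and $k \ge 3$---is precisely the expected derivation. The edge-case checks you flag (smallest prefix period $\ge 1$, the standing assumption $\pcs{w}\neq\emptyset$) are the right ones and are all available.
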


\section{Proposed Algorithm}
\newcommand{\varpp}{\mathit{PP}}
\newcommand{\varppval}[1]{\varpp[#1].\mathit{val}}
\newcommand{\varppreach}[1]{\varpp[#1].\mathit{reach}}
\newcommand{\varppidx}{\mathit{idx}}
\newcommand{\varmaxreach}{\mathit{max\hspace{-0.1em}\_\hspace{-0.1em}reach}}

\label{section:algorithm}

In this section, we propose a sublinear-extra-space algorithm for the parameterized matching problem.
Throughout this section, let $T$ and $P$ be p-strings whose lengths are $n$ and $m$ respectively, and let $k:=|\pcs{P}|+2$.
Besides, we suppose $\pcs{P}\neq\emptyset$.
Our algorithm is an extension of Galil and Seiferas's exact string matching algorithm~\cite{GSlog} and runs in $O(|\pcs{P}|n+m)$ time and $O(\log{m}+|\Params|)$ extra space.
When $|\Params|=|\pcs{P}|=1$, our algorithm behaves exactly as theirs.

\begin{remark}
    \label{remark:pcsnotempty}
    We can assume $\pcs{P}\neq\emptyset$ without loss of generality.
    When $\pcs{P}=\emptyset$, choose any $c\in\Consts$ appearing in $P$ and let constant and parameter alphabets be $\Char\setminus\{c\}$ and $\{c\}$, respectively.
\end{remark}

Firstly, we introduce a method for testing whether two p-strings match.
While it is common to use the \emph{prev-encoding}~\cite{p} for this purpose, it is not suitable for our goal since it requires additional space proportional to the input size.
Thus we use an alternative method as follows, which requires only $O(|\Params|)$ extra space.

\begin{lemma}
    \label{lem:match}
    Consider a prefix $x$ of $P$ and $y\in\String$ with $x\match y$ and any $a,b\in\Char$.
    We have $xa \match yb$ if and only if one of the followings holds:
    \begin{enumerate}
        \item $a\in\Consts$ and $a=b$,
        \item $a\in\Params$ and $\first{P}(a)\ge|x|$ and $b\in\Params$ and $\pccount{y}(b)=0$,
        \item $a\in\Params$ and $\first{P}(a)<|x|$ and $y[\first{P}(a)]=b$,
    \end{enumerate}
    where $\first{P} : \Params\to\N$ and $\pccount{y} : \Params\to\N$ are defined as follows:
    \begin{align*}
        \first{P}(c) &= \begin{cases}
            \min\{i\in\N : i<|x| \text{ and } x[i]=c\} & \text{ if $c\in\pcs{P}$,} \\
            |P| & \text{ if $c\in\Params\setminus\pcs{P}$,}
        \end{cases} \\
        \pccount{y}(c) &= |\{i\in\N : i<|y| \text{ and } y[i]=c\}|
    \end{align*}
\end{lemma}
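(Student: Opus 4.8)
The statement characterizes when extending two matching p-strings $x \match y$ by single characters $a$ and $b$ preserves the match, i.e. when $xa \match yb$. Recall that $x \match y$ means there is a permutation $f \in S_\Params$ with $f(x) = y$. The plan is to analyze what constraints on $b$ arise from each possible nature of $a$, and show these constraints are captured exactly by the three cases, using the auxiliary functions $\first{P}$ (which records the first occurrence of a parameter in $x$, read off from $P$ since $x$ is a prefix of $P$) and $\pccount{y}$ (which counts occurrences in $y$).

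**The approach.**

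I would argue by first fixing a witness permutation $f$ with $f(x) = y$, and then asking: for which $b$ does there exist a permutation $f'$ with $f'(xa) = yb$? Since $f'$ must in particular satisfy $f'(x) = y$, the freedom in choosing $f'$ is exactly the freedom to redefine $f$ on parameter characters not constrained by $x$. I would split on whether $a \in \Consts$ or $a \in \Params$.

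\begin{itemize}
\item If $a \in \Consts$, then $f'(a) = a$ is forced, so $xa \match yb$ iff $b = a$; this is case (1).
\item If $a \in \Params$, the behaviour hinges on whether $a$ already appears in $x$. This is precisely what $\first{P}(a)$ detects: since $x$ is a prefix of $P$, the character $a$ occurs in $x$ iff $\first{P}(a) < |x|$, and in that case $x[\first{P}(a)] = a$.
  \begin{itemize}
  \item If $a$ appears in $x$ (case (3), $\first{P}(a) < |x|$), then $f'(a) = f(a)$ is already determined, and $f(a) = f(x[\first{P}(a)]) = y[\first{P}(a)]$. So the match extends iff $b = y[\first{P}(a)]$.
  \item If $a$ does not appear in $x$ (case (2), $\first{P}(a) \ge |x|$), then $f(a)$ is unconstrained by $x$, so I may set $f'(a)$ to be any parameter character not already in the image $f(\pcs{x})$. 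Since $f$ is a bijection mapping $x$ onto $y$, the parameters occurring in $y$ are exactly $f(\pcs{x})$; thus the admissible values of $b$ are exactly the parameters \emph{not} occurring in $y$, i.e. those with $\pccount{y}(b) = 0$ — and $b$ must be a parameter, not a constant, since $f'$ maps parameters to parameters.
  \end{itemize}
\end{itemize}

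**The main obstacle.**

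The delicate point is the bijectivity bookkeeping in case (2): I must argue that for \emph{every} parameter $b$ with $\pccount{y}(b) = 0$ there genuinely exists a completion $f'$ of the partial assignment to a full permutation of $\Params$, and conversely that no $b$ already occurring in $y$ (nor any constant) can work. The forward direction is the subtle one — it requires observing that setting $f'(a) = b$ keeps the map injective precisely because $b \notin f(\pcs{x}) = \pcs{y}$, and then extending arbitrarily on the remaining unconstrained parameters to a bijection (possible since removing one domain element and one image element from a bijection leaves a bijection between finite sets of equal cardinality). I would handle this by making explicit that $\pcs{y} = f(\pcs{x})$ and that $b \notin \pcs{y}$ is equivalent to $\pccount{y}(b) = 0$, so the partial injection can always be completed. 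The remaining cases are then routine verifications that the stated condition is both necessary and sufficient.
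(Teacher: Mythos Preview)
Your proposal is correct and follows essentially the same approach as the paper's proof: fix a witnessing permutation for $x\match y$, split according to whether $a$ is a constant, a parameter already occurring in $x$, or a fresh parameter, and in each case determine the admissible $b$. The paper's proof is a terse three-sentence version of exactly this argument; your write-up simply makes the bijectivity bookkeeping in case~(2) and the role of $\first{P}$ in detecting ``$a$ occurs in $x$'' explicit.
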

\begin{proof}
    By definition, we have $xa \match yb$ if and only if $b=f(a)$, where $f$ satisfies $x=f(y)$.
    If $a$ is a constant character or appears in $x$, the value $f(a)$ is determined (Case~1 and 3).
    Otherwise, $b$ must be a parameter character not appearing in $y$ (Case~2).
\qed\end{proof}

Let $\mathtt{MATCH}(x,y,a,b,\first{P},\pccount{y})$ be the function which returns whether $xa \match yb$ under the condition $x \match y$ using Lemma~\ref{lem:match}.
Clearly, one can compute it in constant time if $\first{P}$ and $\pccount{y}$ are given as arrays.
Note that $\first{P}$ can be computed in $O(m)$ time and $O(|\Params|)$ space.

\subsection{Pattern Preprocessing}

In this section, we show the preprocessing for the pattern $P$ for our matching algorithm.
The output of the preprocessing is the list of pairs of a prefix period of $P$ (in ascending order) and its reach, just like Galil and Seiferas~\cite{GSlog} introduced for exact string matching.
The list plays a similar role to the \emph{border array} in the parameterized Knuth-Morris-Pratt algorithm~\cite{pKMP}.
While border array uses $\Theta(m)$ space to memorize the shortest periods of all prefixes of $P$, the prefix period list requires only $O(\log{m})$ space by Corollary~\ref{cor:ppnum}.

\begin{algorithm}[!t]
    \DontPrintSemicolon

    \caption{\texttt{PREFIX\_PERIODS}}
    \label{algo:pp}

    \KwIn{$P\in\String$ and $\mathit{first}$}
    \KwOut{a list of all prefix periods of $P$ and their reaches}
    
    \Begin{
        $k \gets |\pcs{P}|+2$ \;
        $\varpp \gets \text{empty list}$ \;
        $\varppidx \gets -1$ \;
        $(p,r) \gets (1,1)$ \;
        Set $\mathit{count}[a] \gets 0$ for each $a\in\Params$ \;
        $\varmaxreach \gets 0$ \;
        \While{$kp\le|P|$}{
            \label{algo:pp:outerwhile}
            \While{$\mathtt{MATCH}(\substr{P}{0}{r-p}, \substr{P}{p}{r}, P[r-p], P[r], \mathit{first}, \mathit{count})$}{
                \label{algo:pp:while}
                Increment $\mathit{count}[P[r]]$ \;
                \label{algo:pp:while:content}
                \label{algo:pp:count}
                $r \gets r+1$ \;
                \label{algo:pp:reach}
                \lIf{$\varppidx+1 < |\varpp|$ and $\varppval{\varppidx + 1} \le \frac{r-p}{k}$}{Increment $\varppidx$}
                \label{algo:pp:pointer:inc}
                \label{algo:pp:while:content_end}
            }
            \label{algo:pp:while_end}

            \vspace{1em}

            \If{$r \ge kp$ and $r > \varmaxreach$\label{algo:pp:detect}}{Push $(p,r)$ into $\varpp$}
            $\varmaxreach \gets \max\{\varmaxreach, r\}$ \;

            \vspace{1em}

            \eIf{$0 \le \varppidx < |\varpp|$ and $\varppreach{\varppidx} \ge r-p > 0$}{
                \label{algo:pp:branch}
                \lFor{$p \le i< p+\varppval{\varppidx}$}{Decrement $\mathit{count}[P[i]]$}
                \label{algo:pp:count:dec}
                $p \gets p+\varppval{\varppidx}$ \;
                \label{algo:pp:shift}
            }{
                \lFor{$p \le i< r$}{Decrement $\mathit{count}[P[i]]$}
                \label{algo:pp:count:reset}
                $p \gets p+\floor{\frac{r-p}{k}}+1$ \;
                \label{algo:pp:limit_shift}
                $r \gets p$ \;
                \label{algo:pp:limit_shift_end}
            }
            \KwSty{until} $\varppval{\varppidx} \le \frac{r-p}{k}$ or $\varppidx=-1$ \KwSty{do} Decrement $\varppidx$ \;
            \label{algo:pp:pointer:dec}

            \label{algo:pp:branch_end}
        }

        \vspace{0.6em}

        \Return{$\varpp$} \;
    }
\end{algorithm}

We present the algorithm for computing the list of pairs of a prefix period and its reach in Algorithm~\ref{algo:pp}.
The algorithm finds prefix periods and their reaches in order from the smallest to the largest and put them into the list $\varpp$.
By $\varppval{\varppidx}$ and $\varppreach{\varppidx}$, we denote the $\varppidx$-th prefix period and its reach in $\varpp$, respectively.
Starting with $p=1$, it monotonically increases $p$ and checks whether an integer $p$ is a prefix period based on Lemma~\ref{lem:ppreach}.
Throughout the algorithm run, we maintain the invariant 
\begin{equation*}
    \hasperiod{\substr{P}{0}{r}}{p}{}\text{, i.e., }\substr{P}{0}{r-p} \equiv \substr{P}{p}{r}  \tag{$\spadesuit$}  
\end{equation*}
We calculate $\reach{P}(p)$ by increasing $r$ as long as $\substr{P}{0}{r-p} \equiv \substr{P}{p}{r}$ holds (Lines~\ref{algo:pp:while}--\ref{algo:pp:while_end}).
To let the function $\mtt{MATCH}$ decide $\substr{P}{0}{r-p} \equiv \substr{P}{p}{r}$, we use two auxiliary arrays $\first{}$ and $\pccount{}$ that satisfy $\first{}[a]=\first{P}(a)$ and $\pccount{}[a] = \pccount{P[p:r]}(a)$, defined in Lemma~\ref{lem:match}.
Moreover, we maintain the variable \ifversion{lncs}{\linebreak}$\varmaxreach$ to be the largest reach calculated so far.
By Lemma~\ref{lem:ppreach}, the condition of Line~\ref{algo:pp:detect} is satisfied if and only if $p$ is a prefix period.
% The pairs of discovered prefix periods and their reaches are stored in the list $\varpp$.
% Here, we denote by $\varppval{\varppidx}$ and $\varppreach{\varppidx}$ the first and the second element of $\varpp[\varppidx]$, respectively.

One can construct the list $\varpp$ by incrementing $p$ one by one, but it takes too much time.
Instead, we skip calculating $\reach{P}(p)$ if we are sure that $p$ is not a prefix period.
For realizing an efficient shift, we maintain a variable $\varppidx$ so that it points at the largest index of $\varpp$ such that $\varppval{\varppidx} \le \frac{r-p}{k}$ (Lemma~\ref{lem:algo:pp:index} below).
The shift amount is determined in the following manner.
If $\varppreach{\varppidx} \ge r-p > 0$ at Line~\ref{algo:pp:branch}, Lemmas~\ref{lem:algo:pp:rp} and~\ref{lem:algo:pp:p} imply $\varppval{\varppidx} = \period{\substr{P}{0}{r-p}}$.
Hence, Lemma~\ref{lem:ppskip} justifies the shift amount $\varppval{\varppidx}$ of $p$ at Line~\ref{algo:pp:shift}.
On the other hand, if $\varppreach{\varppidx} < r-p$, by Lemma~\ref{lem:algo:pp:rp}, we have $\period{\substr{P}{0}{r-p}} > \frac{r-p}{k}$.
This justifies the shift $\floor{\frac{r-p}{k}}+1$ of $p$ at Line~\ref{algo:pp:limit_shift} again by Lemma~\ref{lem:ppskip}.
If $r-p=0$, then $p$ is incremented by just one.

We now prove the lemmas used for justifying our algorithm behavior in the above discussion.
% Firstly, we assume an invariant $\substr{P}{0}{r-p}\match\substr{P}{p}{r}$ (equivalent to $\hasperiod{\substr{P}{0}{r}}{p}{}$, denote it by $\spadesuit$) and derive the others.
Firstly, we assume the invariant $\spadesuit$.
\begin{lemma}
    \label{lem:algo:pp:index}
    Throughout Algorithm~\ref{algo:pp}, the value of the variable $\varppidx$ is always the upper bound that satisfies $\varppval{\varppidx} \le \frac{r-p}{k}$.
    If there does not exists such index, we have $\varppidx=-1$.
\end{lemma}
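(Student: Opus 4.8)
The plan is to establish this as a loop invariant, proved by induction over the execution of Algorithm~\ref{algo:pp}. Concretely, I would track the joint state $(p,r,\varpp,\varppidx)$ and verify that every statement altering one of these preserves the property that $\varppidx$ equals the largest index $i$ with $0\le i<|\varpp|$ and $\varppval{i}\le\frac{r-p}{k}$ (taking this index to be $-1$ when no such $i$ exists). The enabling structural observation is that $\varpp$ is built in \emph{strictly increasing} order: entries are pushed only as $p$ grows, and the test $r>\varmaxreach$ on Line~\ref{algo:pp:detect} forces each pushed reach to strictly exceed all previous ones, so $\varppval{0}<\varppval{1}<\cdots$. Hence the predicate $\varppval{i}\le\frac{r-p}{k}$ is monotone in $i$, the ``upper bound'' is well defined, and a single pointer can in principle track it by moving up or down.

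Initialization is immediate: $\varpp$ is empty and $\varppidx=-1$, so the property holds vacuously. In the inner loop (Lines~\ref{algo:pp:while}--\ref{algo:pp:while_end}) the only change affecting the bound is $r\gets r+1$, which raises $\frac{r-p}{k}$ by exactly $\frac1k\le\frac13$. The key point is that at most one further entry of $\varpp$ can cross the threshold per step: by Lemma~\ref{lem:ppdouble} consecutive prefix periods obey $\varppval{i+1}\ge 2\varppval{i}\ge\varppval{i}+1$, so two entries cannot both fall in a half-open interval of length $\frac1k<1$. Therefore advancing $\varppidx$ by one exactly when $\varppval{\varppidx+1}\le\frac{r-p}{k}$, as Line~\ref{algo:pp:pointer:inc} does, reestablishes the invariant.

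For the shift, both branches increase $p$, and the else-branch additionally resets $r\gets p$ (Lines~\ref{algo:pp:limit_shift}--\ref{algo:pp:limit_shift_end}); in either case $\frac{r-p}{k}$ can only decrease, so the correct index can only move down. The trailing loop on Line~\ref{algo:pp:pointer:dec} decrements $\varppidx$ until $\varppval{\varppidx}\le\frac{r-p}{k}$ or $\varppidx=-1$, which by monotonicity of $\varpp$ lands precisely on the new maximal index. This closes the induction over a full outer iteration.

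The step I expect to be the main obstacle is the push on Line~\ref{algo:pp:detect}, which enlarges $\varpp$ by a new largest-valued entry \emph{between} two uses of $\varppidx$ (the inner-loop increment and the branch read on Line~\ref{algo:pp:branch}) without itself updating $\varppidx$. I must therefore rule out that this appended entry---whose value is the current period $p$---becomes an unrecorded new upper bound, i.e.\ that it is eligible with $p\le\frac{r-p}{k}$ at the branch. The plan is to argue this using the invariant~$\spadesuit$ and the intended role of $\varppidx$: the threshold $\frac{r-p}{k}$ governs the strictly shorter prefix $\substr{P}{0}{r-p}$, whose shortest period (not the just-pushed $p$) is what the shift logic consumes, so the appended entry must be shown not to interfere with the maximum that $\varppidx$ is meant to record. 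Handling this coupling between the freshly computed reach, the pushed value, and the threshold is the delicate part of the argument.
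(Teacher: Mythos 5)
Your induction framework and the key structural observation (the values in $\varpp$ are strictly increasing, so the eligible indices form a prefix and a single pointer can track their maximum) are sound, and your treatment of initialization, the inner-loop increment (at most one entry can cross the threshold per step, since the threshold grows by $\frac{1}{k}<1$ and distinct prefix periods differ by at least $1$), and the downward scan at Line~\ref{algo:pp:pointer:dec} is considerably more careful than the paper's own proof, which merely points at Lines~\ref{algo:pp:pointer:inc} and~\ref{algo:pp:pointer:dec} and asserts preservation. However, the step you defer---the push at Line~\ref{algo:pp:detect}---is a genuine gap, and it cannot be closed the way you hope, because the invariant as literally stated is \emph{false} there. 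Take $\Params=\{\mathtt{A}\}$, $\mathtt{b}\in\Consts$, and $P=\mathtt{AAAAAb}$, so $k=3$. The first inner loop ends with $(p,r)=(1,5)$, Line~\ref{algo:pp:detect} pushes $(1,5)$, and at Line~\ref{algo:pp:branch} we have $\varppval{0}=1\le\frac{r-p}{k}=\frac{4}{3}$ while $\varppidx=-1$: the maximal eligible index is $0$, not $-1$. In general this violation occurs exactly when the freshly pushed entry is itself eligible, i.e.\ when $r\ge(k+1)p$ at the push, and it persists up to Line~\ref{algo:pp:pointer:dec}---in particular it is in force at Line~\ref{algo:pp:branch}, which is precisely where the paper's Lemmas~\ref{lem:algo:pp:p} and~\ref{lem:algo:pp:rp} consume the invariant.

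The correct resolution is therefore not to push your induction through the push, but to weaken the statement: what your argument actually proves is that the invariant holds at every evaluation of the inner-loop condition and at the head of the outer loop (equivalently, with the maximum taken over entries pushed in earlier outer iterations), since in the problematic case the else-branch resets $r-p$ to $0$ and Line~\ref{algo:pp:pointer:dec} drives $\varppidx$ back to $-1$, restoring the invariant before it is next relied upon inside a loop. One must then argue separately that the branch still behaves correctly in the window where the invariant fails, i.e.\ when $r\ge(k+1)p$: (i) the if-branch cannot fire, because an entry with value $p''<p$ and reach at least $r-p$ would make $p''$ a period of $\substr{P}{0}{r-p}$, contradicting $\period{\substr{P}{0}{r-p}}=p$ (which follows from Lemma~\ref{lem:ppdef}, as $r-p\ge kp$); and (ii) the else-branch shift skips no prefix period: a skipped prefix period $q$ would be a period of $\substr{P}{0}{r+1}$, Lemma~\ref{lem:p_gcd2} forces $q=\ell p$, and then Lemma~\ref{lem:chars} shows the permutation witnessing $\hasperiod{\substr{P}{0}{r+1}}{\ell p}{}$ agrees with $f^{\ell}$ on all of $\pcs{\substr{P}{0}{r}}$, where $\hasperiod{\substr{P}{0}{r}}{p}{f}$, whence $\hasperiod{\substr{P}{0}{r+1}}{p}{f}$, contradicting $\reach{P}(p)=r$. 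Your instinct that the push is the delicate part is exactly right---indeed the paper's one-line proof overlooks it entirely---but as a proof of the lemma as stated, the proposal is incomplete, and the missing piece is an actual counterexample to the statement rather than a routine verification you left for later.
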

\begin{proof}
    The variable $\varppidx$ is updated in conjunction with $p$ and $r$ to preserve the condition.
    See Lines~\ref{algo:pp:pointer:inc} and \ref{algo:pp:pointer:dec}.
    \qed
\end{proof}

\begin{lemma}
    \label{lem:algo:pp:p}
    Let $\spadesuit$ hold at Line~\ref{algo:pp:branch} in Algorithm~\ref{algo:pp}.
    If $\period{\substr{P}{0}{r-p}} \le \frac{r-p}{k}$, we have $\varppval{\varppidx} = \period{\substr{P}{0}{r-p}}$.
\end{lemma}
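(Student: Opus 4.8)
The plan is to identify $\period{\substr{P}{0}{r-p}}$ combinatorially as the unique largest prefix period of $P$ not exceeding $\frac{r-p}{k}$, and then to read it off from $\varpp$ using Lemma~\ref{lem:algo:pp:index}. Write $w':=\substr{P}{0}{r-p}$ and $\pi:=\period{w'}$; the hypothesis is $\pi\le\frac{|w'|}{k}$, and the invariant $\spadesuit$ tells us that $p$ is a period of $\substr{P}{0}{r}$.

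First I would show that $\pi$ is itself a prefix period of $P$. Since $w'$ is a prefix of $P$ with $\period{w'}=\pi$ and $\pi\le\frac{|w'|}{k}$, Definition~\ref{def:pp} applies directly with $w'$ as a witness.

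Next I would show that $\pi$ is the \emph{largest} prefix period of $P$ that is $\le\frac{r-p}{k}$. Suppose toward a contradiction that some prefix period $\pi'$ satisfies $\pi<\pi'\le\frac{r-p}{k}$. Since $w'$ is a prefix of $P$ we have $\pi'\le\frac{|w'|}{k}\le\frac{|P|}{k}$, so Lemma~\ref{lem:ppdef} gives $\period{\substr{P}{0}{k\pi'}}=\pi'$. But $k\pi'\le|w'|$, so $\substr{P}{0}{k\pi'}$ is a prefix of $w'$ of length $k\pi'\ge\pi$; as $\pi$ is a period of $w'$ it is a period of this shorter prefix as well, whence $\period{\substr{P}{0}{k\pi'}}\le\pi<\pi'$, a contradiction. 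Combining the two steps, $\pi$ is the largest prefix period of $P$ with value $\le\frac{r-p}{k}$.

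Finally I would invoke Lemma~\ref{lem:algo:pp:index}, which says that $\varppidx$ indexes the largest entry of $\varpp$ whose value is $\le\frac{r-p}{k}$. It then remains to know that $\pi$ has actually been recorded in $\varpp$ by the time control reaches Line~\ref{algo:pp:branch}, and this is the step I expect to be the main obstacle, since it couples the local statement of Lemma~\ref{lem:algo:pp:index} with a global completeness property of the list. It rests on the facts that the algorithm enumerates candidate periods in strictly increasing order, that each prefix period is pushed at Line~\ref{algo:pp:detect} (which precedes Line~\ref{algo:pp:branch} within an iteration), that the shift rule never steps over a prefix period, and that $\pi\le p$ --- the last following by distinguishing the case $r-p\ge p$, where $p$ is a period of $w'$ so $\pi\le p$, from the case $r-p<p$, where $3\pi\le k\pi\le r-p<p$ forces $\pi<p$. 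Granting the enumeration invariant that $\varpp$ contains every prefix period of $P$ that is $\le p$, membership of $\pi$ in $\varpp$ follows, and combined with the maximality established above we conclude $\varppval{\varppidx}=\pi$, as required.
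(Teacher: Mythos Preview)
Your argument is correct and essentially coincides with the paper's: both show that $\pi=\period{\substr{P}{0}{r-p}}$ is a prefix period of $P$ with $\pi\le p$ (hence already in $\varpp$), and both use Lemma~\ref{lem:ppdef} to rule out any larger candidate---the paper phrases this as the inequality $p''\le p'$ where $p''=\varppval{\varppidx}$, while you phrase it as maximality of $\pi$ among prefix periods $\le\frac{r-p}{k}$. The membership step you single out as the main obstacle is handled identically in the paper, which simply writes ``$p'\le p$, thus $p'$ is in the list $\varpp$,'' relying on exactly the enumeration invariant you make explicit.
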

\begin{proof}
    Let $w' := \substr{P}{0}{r-p}$, $p' := \period{w'}$, $p'' := \varppval{\varppidx}$, and $w'' = \substr{P}{0}{kp''}$.
    By the assumption, $p'$ is a prefix period of $P$.
    Additionally, we have $p' \le p$ since $\hasperiod{w'}{p}{}$.
    Thus $p'$ is in the list $\varpp$, and thus we have $p' \le p''$ by Lemma~\ref{lem:algo:pp:index}.
    On the other hand, we have $\period{w''}=p''$ by Lemma~\ref{lem:ppdef}.
    Since $|w''|=kp'' \le r-p=|w'|$, we have $\period{w''} \le \period{w'}$, i.e. $p'' \le p'$.
    Hence we get $p' = p''$.
    \qed
\end{proof}

\begin{lemma}\label{lem:algo:pp:rp}
    Let $\spadesuit$ hold at Line~\ref{algo:pp:branch} in Algorithm~\ref{algo:pp}.
    We have $\varppreach{\varppidx} \ge r-p \iff\ifversion{arxiv}{\makebox[0.04\textwidth]{}\linebreak} \period{\substr{P}{0}{r-p}} \le \frac{r-p}{k}$.
\end{lemma}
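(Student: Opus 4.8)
The plan is to prove the biconditional by translating both sides into a single statement, namely whether $\varppval{\varppidx}$ is a period of the prefix $w' := \substr{P}{0}{r-p}$, and then invoking the two facts already established about the pointer $\varppidx$. Throughout I rely on two things. First, by Lemma~\ref{lem:algo:pp:index}, whenever $\varppidx \ge 0$ the value $\varppval{\varppidx}$ is the largest prefix period recorded in $\varpp$ that is at most $\frac{r-p}{k}$. Second, the stored field satisfies $\varppreach{\varppidx} = \reach{P}(\varppval{\varppidx})$, so by Definition~\ref{def:reach} the inequality $\varppreach{\varppidx} \ge s$ is equivalent to saying that $\varppval{\varppidx}$ is a period of $\substr{P}{0}{s}$.

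For the direction $(\impliedby)$, I would assume $\period{w'} \le \frac{r-p}{k}$. Then Lemma~\ref{lem:algo:pp:p} applies directly and yields $\varppval{\varppidx} = \period{w'}$; in particular this forces $\varppidx \ge 0$. Since $\varppval{\varppidx}$ is thus a period of $w'$, whose length is exactly $r-p$, the reach characterization above gives $\reach{P}(\varppval{\varppidx}) \ge r-p$, that is, $\varppreach{\varppidx} \ge r-p$.

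For the converse $(\implies)$, I would assume $\varppreach{\varppidx} \ge r-p$, which again forces $\varppidx \ge 0$. By the reach characterization, $\varppval{\varppidx}$ is a period of $\substr{P}{0}{\varppreach{\varppidx}}$, a prefix of length at least $r-p$. Because $k \ge 1$ we have $\varppval{\varppidx} \le \frac{r-p}{k} \le r-p = |w'|$, so by the observation that a period of a string is a period of any substring no shorter than that period, $\varppval{\varppidx}$ is also a period of $w'$. Hence $\period{w'} \le \varppval{\varppidx} \le \frac{r-p}{k}$, where the last inequality is Lemma~\ref{lem:algo:pp:index}.

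The step I expect to be most delicate is the bookkeeping around the pointer $\varppidx$, in particular the edge case $\varppidx = -1$, where $\varppval{\varppidx}$ and $\varppreach{\varppidx}$ are undefined and both sides of the equivalence should be read as false. This case is precisely the one excluded by Lemma~\ref{lem:algo:pp:p}: if $\varppidx = -1$ then no prefix period of $P$ is at most $\frac{r-p}{k}$, so $\period{w'} > \frac{r-p}{k}$ necessarily holds, keeping the equivalence intact. The other point needing care is the identification $\varppreach{\varppidx} = \reach{P}(\varppval{\varppidx})$, which I would treat as a maintained invariant of Algorithm~\ref{algo:pp} rather than re-derive inside this proof.
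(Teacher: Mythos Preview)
Your proposal is correct and follows essentially the same route as the paper: both directions hinge on Lemma~\ref{lem:algo:pp:index} for the bound $\varppval{\varppidx}\le\frac{r-p}{k}$, on Lemma~\ref{lem:algo:pp:p} for the identification $\varppval{\varppidx}=\period{\substr{P}{0}{r-p}}$ in the $(\impliedby)$ direction, and on the stored invariant $\varppreach{\varppidx}=\reach{P}(\varppval{\varppidx})$ together with Definition~\ref{def:reach} for the $(\implies)$ direction. Your treatment is a bit more explicit about the edge case $\varppidx=-1$ and about why $\varppval{\varppidx}\le|w'|$, but the underlying argument is the same.
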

\begin{proof}
    Let $w' := \substr{P}{0}{r-p}$ and $p' := \varppval{\varppidx}$.
    \vspace{0.5mm}
    \\
    $(\!\implies\!)\;$
    We have $\hasperiod{w'}{p'}{}$ by the assumption.
    Then $\period{w'} \le p' \le \frac{r-p}{k}$ holds by Lemma~\ref{lem:algo:pp:index}.
    \\
    $(\!\impliedby\!)\;$
    By Lemma~\ref{lem:algo:pp:p}, we have $p' = \period{w'}$.
    Then $\varppreach{\varppidx} = \ifversion{lncs}{\linebreak}\reach{P}(p') = \ifversion{arxiv}{\makebox[0.02\textwidth]{}\linebreak}\reach{P}(\period{w'}) \ge |w'| = r-p$.
    \qed
\end{proof}

Now, we show that the invariant $\spadesuit$ always holds.

\begin{lemma}
    \label{lem:algo:pp:match}
    Throughout Algorithm~\ref{algo:pp}, we have $\substr{P}{0}{r-p}\match\substr{P}{p}{r}$.
\end{lemma}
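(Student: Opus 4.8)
The plan is to prove the invariant $\spadesuit$, namely $\substr{P}{0}{r-p}\match\substr{P}{p}{r}$, by induction on the three program points where the window boundaries $p$ and $r$ change: the inner loop body, the skip branch (Lines~\ref{algo:pp:branch}--\ref{algo:pp:shift}), and the reset branch (ending at Line~\ref{algo:pp:limit_shift_end}). Alongside $\spadesuit$ I would carry the auxiliary invariant that $\pccount{}[a]=\pccount{\substr{P}{p}{r}}(a)$, since the correctness of the test $\mathtt{MATCH}$ through Lemma~\ref{lem:match} depends on it; this auxiliary invariant is maintained by the increment at Line~\ref{algo:pp:count} and the decrements at Lines~\ref{algo:pp:count:dec} and~\ref{algo:pp:count:reset}. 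The concluding ``until'' loop at Line~\ref{algo:pp:pointer:dec} and the increment at Line~\ref{algo:pp:pointer:inc} touch only $\varppidx$, so they leave both invariants untouched and can be ignored for $\spadesuit$ (they restore the invariant of Lemma~\ref{lem:algo:pp:index} instead).

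The base case is immediate: initially $(p,r)=(1,1)$, so $\substr{P}{0}{r-p}$ and $\substr{P}{p}{r}$ are both empty and trivially match. For the inner loop, the body runs only when $\mathtt{MATCH}$ returns true; by Lemma~\ref{lem:match}, together with the inductive hypothesis $\substr{P}{0}{r-p}\match\substr{P}{p}{r}$, this yields $\substr{P}{0}{r-p+1}\match\substr{P}{p}{r+1}$, which is exactly $\spadesuit$ for the incremented value of $r$ at Line~\ref{algo:pp:while}. For the reset branch, $r$ is set to $p$ after the shift, so again both substrings are empty and $\spadesuit$ holds trivially.

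The delicate case, and the step I expect to be the main obstacle, is the skip branch, where $p$ increases by $p':=\varppval{\varppidx}$ while $r$ is unchanged, so I must re-establish $\substr{P}{0}{r-p-p'}\match\substr{P}{p+p'}{r}$. I would argue purely with the period algebra. The branch condition $\varppreach{\varppidx}\ge r-p$ reads $\reach{P}(p')\ge r-p$, which by Definition~\ref{def:reach} means $p'$ is a period of $\substr{P}{0}{r-p}$, hence $\substr{P}{0}{r-p-p'}\match\substr{P}{p'}{r-p}$. By Lemma~\ref{lem:algo:pp:index} we also have $p'\le\frac{r-p}{k}$; since $\pcs{P}\neq\emptyset$ gives $k\ge 3$, this forces $r\ge p+p'$, so the substring $\substr{P}{p'}{r}$ has length at least $p$. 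The inductive hypothesis says $p$ is a period of $\substr{P}{0}{r}$, and a period carries over to any substring of length at least $p$, so $p$ is a period of $\substr{P}{p'}{r}$, i.e. $\substr{P}{p'}{r-p}\match\substr{P}{p+p'}{r}$. Chaining the two relations by transitivity of $\match$ yields $\substr{P}{0}{r-p-p'}\match\substr{P}{p+p'}{r}$, which is precisely $\spadesuit$ for the shifted $p$ and the unchanged $r$.

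The only genuine content is this transitivity chain, and the one subtlety to get right is the length bound $r\ge p+p'$, which is what guarantees $p$ remains a period of the shifted substring $\substr{P}{p'}{r}$; everything else (the base case, the inner loop, and the reset branch) is routine. Notably, I would not need the sharper identification $p'=\period{\substr{P}{0}{r-p}}$ from Lemma~\ref{lem:algo:pp:p} here, since for $\spadesuit$ it suffices that $p'$ is merely \emph{a} period of $\substr{P}{0}{r-p}$, which the reach condition supplies directly.
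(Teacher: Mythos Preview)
Your proof is correct and follows essentially the same route as the paper: the same three-case analysis on where $(p,r)$ changes, and the same transitivity chain $\substr{P}{0}{r-p-p'} \match \substr{P}{p'}{r-p} \match \substr{P}{p+p'}{r}$ for the skip branch. Your observation that the branch condition $\varppreach{\varppidx}\ge r-p$ alone already makes $p'$ a period of $\substr{P}{0}{r-p}$ is a mild simplification over the paper, which instead invokes Lemma~\ref{lem:algo:pp:p} to identify $p'$ as the \emph{shortest} period and then must add a remark that this does not make the argument circular; your direct route sidesteps that concern.
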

\begin{proof}
    One must see the condition preserved at the lines in which $p$ or $r$ is updated.
    The update at Lines~\ref{algo:pp:limit_shift}--\ref{algo:pp:limit_shift_end} is trivial.
    Line~\ref{algo:pp:reach} preserves the condition, ensured by the condition of Line~\ref{algo:pp:while}.
    For Line~\ref{algo:pp:shift}, let $q:=\varppval{\varppidx}$.
    Since $q=\period{\substr{P}{0}{r-p}}$ by Lemma~\ref{lem:algo:pp:p}, we have $\substr{P}{0}{r-(p+q)} \match \substr{P}{q}{r-p} \match \substr{P}{p+q}{r}$.
    Note that Lemma~\ref{lem:algo:pp:p} requires $\spadesuit$ only at Line~\ref{algo:pp:branch}, so the argument does not circulate.
    \qed    
\end{proof}

The following lemma justifies the shift of $p$ at Lines~\ref{algo:pp:shift} and~\ref{algo:pp:limit_shift}.

\begin{lemma}
    \label{lem:ppskip}
    Consider $P\in\String$, $p\in\Nplus$ and let $r:=\reach{P}(p)$.
    Then, no prefix period $q$ of $P$ exists such that $p < q < p+\period{\substr{P}{0}{r-p}}$.
\end{lemma}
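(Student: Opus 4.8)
The plan is to argue by contradiction, supposing that some prefix period $q$ of $P$ satisfies $p < q < p + \period{\substr{P}{0}{r-p}}$. Let me abbreviate $w' := \substr{P}{0}{r-p}$ and $s := \period{w'}$, so the assumption reads $p < q < p+s$. Since $q$ is a prefix period, Lemma~\ref{lem:ppdef} tells us that $\period{\substr{P}{0}{kq}} = q$, and in particular $q$ is a period of the prefix $\substr{P}{0}{kq}$; I will want to transfer this periodicity onto $w'$ so that $w'$ ends up having two competing periods $s$ (its shortest) and $q$.

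First I would establish that $q$ is a period of $w'$, i.e.\ that $\reach{P}(q) \ge |w'| = r-p$. The natural route is to compare the reach of $q$ with $r = \reach{P}(p)$. Because $q$ is a prefix period, condition~(2) of Lemma~\ref{lem:ppreach} should give control over how $\reach{P}(q)$ relates to the reaches of smaller integers; in particular, since $p < q$, I expect $\reach{P}(p) < \reach{P}(q)$, which would yield $\reach{P}(q) > r > r-p$ and hence $\hasperiod{w'}{q}{}$. This is the step I expect to be the main obstacle, since it requires carefully matching up the definition of prefix period (via $\reach{w}$) with the concrete prefix $w' = \substr{P}{0}{r}$ restricted to length $r-p$, and making sure the strict inequality in Lemma~\ref{lem:ppreach}(2) is applicable with $p$ playing the role of the smaller index.

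Once $w'$ is known to carry both periods $s = \period{w'}$ and $q$, with $s \le p < q$, the finish is combinatorial. I would invoke the parameterized periodicity lemma, Lemma~\ref{lem:p_gcd2}: provided the length bound $|w'| \ge q + s + \min(q,s)\cdot(|\pcs{w'}|-1)$ holds, we obtain $\hasperiod{w'}{\gcd(s,q)}{}$, whence $\gcd(s,q) \ge \period{w'} = s$, forcing $s \mid q$. Combined with $q < p + s \le s + s = 2s$ (using $p \le q$ and, more sharply, the assumption $q < p+s$ together with $s \le p$), the divisibility $s \mid q$ and $s < q < 2s$ would be contradictory, exactly as in the proof of Lemma~\ref{lem:ppdouble}. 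The remaining care is to verify the length hypothesis of Lemma~\ref{lem:p_gcd2}; here I would use that $q \le \frac{|{\substr{P}{0}{kq}}|}{k} = \frac{kq}{k}$ is a prefix period so that $kq \le r$ by the reach inequality just established, giving $|w'| = r-p$ large enough relative to $q$ and $s$ to satisfy the bound. This mirrors the length computation carried out in Lemma~\ref{lem:ppdouble}, so I expect it to go through with the substitution $k = |\pcs{w}|+2$.
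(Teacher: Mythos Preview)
Your argument breaks at the final contradiction. You write ``$q < p+s \le s+s = 2s$ \dots\ together with $s \le p$'', but $s \le p$ gives $p+s \ge 2s$, not $p+s \le 2s$. From $q < p+s$ and $s \le p$ you cannot conclude $q < 2s$; all you get from $s \mid q$ and $s \le p < q$ is $q \ge 2s$, which is perfectly compatible with $q < p+s$ whenever $s < p$. So the Lemma~\ref{lem:ppdouble}-style finish does not go through. There are secondary gaps too: to make $q$ a period of $w' = \substr{P}{0}{r-p}$ you need $q \le |w'| = r-p$, but $q < p+s$ and $s \le r-p$ only yield $q < r$; and your claim ``$kq \le r$'' does not follow from $\reach{P}(q) > r$ and $\reach{P}(q) \ge kq$.

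The paper's proof sidesteps all of this with a one-line application of the shift lemma. Since $r = \reach{P}(p)$ we have $\substr{P}{p}{r} \match \substr{P}{0}{r-p}$; applying Lemma~\ref{lem:shift} with $x := \substr{P}{p}{r}$, $y := \substr{P}{0}{r-p}$, and $\delta := q-p \in (0,\period{y})$ gives $\substr{P}{q}{r} \nmatch \substr{P}{0}{r-q}$, i.e.\ $q$ is not a period of $\substr{P}{0}{r}$. Hence $\reach{P}(q) < r = \reach{P}(p)$, and since $p < q$ this violates condition~(2) of Lemma~\ref{lem:ppreach}, so $q$ is not a prefix period. No appeal to Lemma~\ref{lem:p_gcd2} is needed. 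Note that your first step (using Lemma~\ref{lem:ppreach}(2) in the contrapositive direction to get $\reach{P}(q) > r$) is correct and is exactly the same ingredient the paper uses; the detour through the periodicity lemma afterwards is where things go wrong.
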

\begin{proof}
    We use Lemma~\ref{lem:shift} as $x:=\substr{P}{p}{r}$, $y:=\substr{P}{0}{r-p}$, $\delta:=q-p$ to obtain $\substr{P}{q}{r} \nmatch \substr{P}{0}{r-q}$, which means $\nhasperiod{\substr{P}{0}{r}}{q}{}$.
    Thus we have $\reach{P}(q) < r = \reach{P}(p)$, which implies that $q$ is not a prefix period of $P$ by Lemma~\ref{lem:ppreach}.
    \qed
\end{proof}

We have thus far proved the validity of Algorithm~\ref{algo:pp}.
Now, we show that the algorithm runs in $O(m)$ time.
Firstly, notice that the while loops at Line~\ref{algo:pp:outerwhile} and \ref{algo:pp:while} are repeated only $O(m)$ times in total, since the quantity $kp+r$ keeps increasing and $kp+r \le k\cdot\frac{m}{k}+m = O(m)$.
Hence, the fact we must show is that decrementing $\mathit{count}$ and $\varppidx$ at Line~\ref{algo:pp:count:dec}, \ref{algo:pp:count:reset}, and \ref{algo:pp:pointer:dec} takes $O(m)$ time in total.
As their values are always greater than or equal to their initial values, the number of decrements does not exceed the number of increments, which is $O(m)$ since they are in Line~\ref{algo:pp:while:content}--\ref{algo:pp:while:content_end}.

\begin{theorem}
    All prefix periods of $P$ and their reaches can be calculated in $O(m)$ time and $O(\log{m}+|\Params|)$ extra space.
\end{theorem}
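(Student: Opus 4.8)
The plan is to treat this theorem as the assembly of three separate claims about Algorithm~\ref{algo:pp}: that it outputs exactly the prefix periods of $P$ together with their reaches in ascending order, that it runs in $O(m)$ time, and that it uses $O(\log m + |\Params|)$ extra space. Almost all of the underlying work has already been discharged by the lemmas preceding the theorem, so the proof will consist mainly in citing them in the right order and checking that their hypotheses are met at the relevant lines.

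For correctness I would first argue that the inner loop (Lines~\ref{algo:pp:while}--\ref{algo:pp:while_end}) computes $\reach{P}(p)$ exactly: by Lemma~\ref{lem:algo:pp:match} the invariant $\spadesuit$ always holds, which gives $\hasperiod{\substr{P}{0}{r}}{p}{}$ and hence $r \le \reach{P}(p)$ on entry, while the loop condition advances $r$ as long as the match extends, so on exit $r = \reach{P}(p)$. I would then observe that the detection test at Line~\ref{algo:pp:detect} is precisely conditions~(1) and~(2) of Lemma~\ref{lem:ppreach}, with $\varmaxreach$ playing the role of $\max_{q<p}\reach{P}(q)$; thus $(p,r)$ is pushed into $\varpp$ iff $p$ is a prefix period. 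Finally I would justify that no prefix period is ever jumped over: the shift amount is identified with $\period{\substr{P}{0}{r-p}}$ on the branch of Line~\ref{algo:pp:shift} (via Lemmas~\ref{lem:algo:pp:index}, \ref{lem:algo:pp:p}, and~\ref{lem:algo:pp:rp}) or bounded below by $\floor{(r-p)/k}+1$ on the branch of Line~\ref{algo:pp:limit_shift}, and in both cases Lemma~\ref{lem:ppskip} guarantees that every integer strictly between the old and new values of $p$ fails to be a prefix period. Since $p$ increases monotonically, the outer loop runs while $kp \le m$, and every prefix period satisfies $p \le m/k$, this shows all prefix periods are detected in ascending order.

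For the time bound I would recapitulate the potential argument already sketched in the text: the quantity $kp + r$ strictly increases at every iteration of either while loop and never exceeds $k\cdot\frac{m}{k} + m = O(m)$, so the loops run $O(m)$ times in total; and because $\mathit{count}$ and $\varppidx$ never drop below their initial values, the number of decrements at Lines~\ref{algo:pp:count:dec}, \ref{algo:pp:count:reset}, and~\ref{algo:pp:pointer:dec} is bounded by the number of increments inside the inner loop, which is $O(m)$. For the space bound I would note that $\varpp$ holds at most $\log_2 m$ entries by Corollary~\ref{cor:ppnum}, each of constant size; that $\mathit{count}$ and $\mathit{first}$ are arrays indexed by $\Params$ and so occupy $O(|\Params|)$ cells; and that every remaining variable is a scalar. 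Summing yields $O(\log m + |\Params|)$.

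The hard part will be the correctness of the reach computation under the lazy handling of $r$: after the non-trivial shift at Line~\ref{algo:pp:shift} the variable $r$ is deliberately not reset, yet the argument above needs $r = \reach{P}(p)$ to hold again at the next visit to Line~\ref{algo:pp:detect}. Making this rigorous requires leaning on Lemma~\ref{lem:algo:pp:match} to re-establish $\spadesuit$ after the shift, so that $r \le \reach{P}(p)$ for the new $p$, and then invoking the inner loop once more. A related subtlety is confirming that the single scalar $\varmaxreach$ faithfully realizes condition~(2) of Lemma~\ref{lem:ppreach} even for candidate values $q$ that the shifts skipped over; this is exactly what Lemma~\ref{lem:ppskip} secures, since those skipped $q$ are neither prefix periods nor attain a reach as large as the current one. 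Once these two points are checked, the remaining verifications are routine.
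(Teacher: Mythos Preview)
Your proposal is correct and follows essentially the same approach as the paper: the correctness argument assembles Lemmas~\ref{lem:algo:pp:index}--\ref{lem:algo:pp:match} and~\ref{lem:ppskip} exactly as the paper does in the discussion preceding the theorem, and the time and space bounds reproduce the paper's potential argument on $kp+r$ together with the increment/decrement balancing and the appeal to Corollary~\ref{cor:ppnum}. If anything, you are more explicit than the paper about why $\varmaxreach$ correctly realises condition~(2) of Lemma~\ref{lem:ppreach} even for the skipped values of $q$ (via the reach inequality hidden in the proof of Lemma~\ref{lem:ppskip}), which the paper leaves implicit.
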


\subsection{Searching Parameterized Matches}

\begin{algorithm}[!t]
    \DontPrintSemicolon

    \caption{\texttt{SEARCH}}
    \label{algo:search}

    \KwIn{$T,P\in\String$}
    \KwOut{all $0\le i\le |T|-|P|$ such that $\substr{T}{i}{i+|P|}\match P$}
    
    \Begin{
        $k \gets |\pcs{P}|+2$ \;
        $\mathit{first} \gets \first{P}$ \;
        $\varpp \gets \mathtt{PREFIX\_PERIODS}(P,\mathit{first})$ \;
        $\varppidx \gets -1$ \;
        $(i,j) \gets (0,0)$ \;
        Set $\mathit{count}[a] \gets 0$ for each $a\in\Params$ \;
        \While{$i < |T|-|P|$}{
            \While{$\mathtt{MATCH}(\substr{P}{0}{j-i}, \substr{T}{i}{j}, P[j-i], T[j], \mathit{first}, \mathit{count})$}{
                Increment $\mathit{count}[T[j]]$ \;
                $j \gets j + 1$ \;
                \lIf{$\varppidx+1 < |\varpp|$ and $\varppval{\varppidx+1} \le \frac{j-i}{k}$}{Increment $\varppidx$}
            }

            \vspace{1em}

            \If{$j-i = |P|$\label{algo:search:match}}{\KwSty{output} $i$}

            \vspace{1em}

            \eIf{$0 \le \varppidx < |\varpp|$ and $\varppreach{\varppidx} \ge j-i > 0$}{
                \lFor{$i \le u < i+\varppval{\varppidx}$}{Decrement $\mathit{count}[T[u]]$}
                $i \gets i + \varppval{\varppidx}$ \;
            }{
                \lFor{$i \le u < j$}{Decrement $\mathit{count}[T[u]]$}
                $i \gets i + \floor{\frac{j-i}{k}} + 1$ \;
                $j \gets i$ \;
            }
            \KwSty{until} $\varppval{\varppidx} \le \frac{j-i}{k}$ or $\varppidx=-1$ \KwSty{do} Decrement $\varppidx$ \;
        }
    }
\end{algorithm}

Our matching algorithm is shown in Algorithm~\ref{algo:search}.
As it is the case for the Galil-Seiferas algorithm, it resembles the preprocess.
Now, the invariants in Algorithm~\ref{algo:search} are obtained by replacing $p$, $r$, and $\substr{P}{p}{r}$ in Lemma~\ref{lem:algo:pp:index}--\ref{lem:algo:pp:match} with $i$, $j$, and $\substr{T}{i}{j}$, respectively.
Particularly, by the invariant that $\substr{P}{0}{j-i} \match \substr{T}{i}{j}$, one can find matching positions $i$ when $j=i+|P|$ (Line~\ref{algo:search:match}).
The shift amounts are also justified by using Lemma~\ref{lem:shift} as $x:=\substr{T}{i}{j}$ and $y:=\substr{P}{0}{j-i}$, whose conclusion $\substr{T}{i+\delta}{j} \nmatch \substr{P}{0}{j-i-\delta}$ implies $\substr{T}{i+\delta}{i+\delta+|P|} \nmatch P$ for any $\delta$ smaller than the shift by the algorithm.

\begin{theorem}
    \label{th:algo}
    The parameterized matching problem can be solved in $O(|\pcs{P}|n+m)$ time and $O(\log{m}+|\Params|)$ extra space.
\end{theorem}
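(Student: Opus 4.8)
The plan is to lift the entire correctness-and-complexity analysis of Algorithm~\ref{algo:pp} to Algorithm~\ref{algo:search} via the substitution $p \mapsto i$, $r \mapsto j$, and $\substr{P}{p}{r} \mapsto \substr{T}{i}{j}$, exploiting that the two procedures share an identical control structure. First I would re-establish the loop invariant $\substr{P}{0}{j-i} \match \substr{T}{i}{j}$ — the exact analogue of Lemma~\ref{lem:algo:pp:match} — by checking it is preserved at every line updating $i$ or $j$: advancing $j$ is guarded by the $\mathtt{MATCH}$ test, the bounded-shift branch resets $j$ to $i$ (making the claim vacuous), and the prefix-period shift branch is justified exactly as in Lemma~\ref{lem:algo:pp:match}, since Lemmas~\ref{lem:algo:pp:index}--\ref{lem:algo:pp:p} transfer verbatim and give $\varppval{\varppidx} = \period{\substr{P}{0}{j-i}}$. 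With this invariant in hand, the test $j-i=|P|$ at Line~\ref{algo:search:match} holds precisely when $P = \substr{P}{0}{|P|} \match \substr{T}{i}{i+|P|}$, so every reported $i$ is a genuine parameterized occurrence.

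Second, I would argue completeness, i.e., that no occurrence is skipped by a shift. Whenever the algorithm increases $i$ by an amount $s$, I would invoke Lemma~\ref{lem:shift} with $x := \substr{T}{i}{j}$ and $y := \substr{P}{0}{j-i}$ (legitimate because $x \match y$ by the invariant): for every $0 < \delta < s$ it yields $\substr{T}{i+\delta}{j} \nmatch \substr{P}{0}{j-i-\delta}$, whence $\substr{T}{i+\delta}{i+\delta+|P|} \nmatch P$. The only thing left to verify is that the two shift amounts used by the algorithm — $\varppval{\varppidx}$ in the prefix-period branch and $\floor{\frac{j-i}{k}}+1$ in the other — are bounded above by the safe distance $\period{\substr{P}{0}{j-i}}$ guaranteed by Lemma~\ref{lem:ppskip}; this is the same case analysis as in the justification of Lines~\ref{algo:pp:shift} and~\ref{algo:pp:limit_shift}, driven by Lemma~\ref{lem:algo:pp:rp}.

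Third comes the resource analysis. For space, the list $\varpp$ holds at most $\log_2 m$ entries by Corollary~\ref{cor:ppnum}, the arrays $\mathit{first}$ and $\mathit{count}$ occupy $O(|\Params|)$ cells, and the remaining variables are scalar, giving $O(\log m + |\Params|)$ extra space; the preprocessing call inherits the same bound from the preprocessing theorem above. For time, preprocessing costs $O(m)$. For the search itself I would reuse the monotonicity argument of Algorithm~\ref{algo:pp}: the quantity $ki + j$ strictly increases across the combined inner/outer loop iterations, and since $i \le n - |P|$ and $j \le n$ we have $ki + j \le (k+1)n = O(kn) = O(|\pcs{P}|\,n)$, which bounds the number of $\mathtt{MATCH}$ calls; the decrements of $\mathit{count}$ and $\varppidx$ are charged against their increments, which occur only inside those iterations, so they add at most $O(|\pcs{P}|\,n)$ more. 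Summing yields $O(|\pcs{P}|\,n + m)$ time.

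The step I expect to be the main obstacle is not any single inequality but the careful confirmation that Lemmas~\ref{lem:algo:pp:index}--\ref{lem:algo:pp:match} really do transfer under the substitution without hidden dependence on the pattern-only setting — in particular that $\varppidx$ still tracks the largest list index with $\varppval{\varppidx} \le \frac{j-i}{k}$ when the moving window now lies in $T$ rather than $P$. Once that bookkeeping is checked, correctness and timing follow from the already-proved lemmas.
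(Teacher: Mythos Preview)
Your proposal is correct and follows essentially the same route as the paper: the paper's own justification of Theorem~\ref{th:algo} is precisely to transport Lemmas~\ref{lem:algo:pp:index}--\ref{lem:algo:pp:match} via the substitution $p\mapsto i$, $r\mapsto j$, $\substr{P}{p}{r}\mapsto\substr{T}{i}{j}$, invoke Lemma~\ref{lem:shift} with $x:=\substr{T}{i}{j}$ and $y:=\substr{P}{0}{j-i}$ for the shift safety, and reuse the $kp+r$ monotonicity argument as $ki+j\le(k+1)n=O(|\pcs{P}|n)$ for the running time. One cosmetic remark: in your completeness paragraph the relevant bound on the shift amount comes from Lemmas~\ref{lem:algo:pp:p} and~\ref{lem:algo:pp:rp} together with Lemma~\ref{lem:shift}, not from Lemma~\ref{lem:ppskip} (which concerns missing prefix periods in the preprocess, not missing occurrences in the search).
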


\section{Conclusion and Future Work}

We studied the periodicity of parameterized strings and extended the Galil-Seiferas algorithm~\cite{GSlog} for parameterized matching.
The proposed algorithm requires only sublinear extra space.
The properties of periods of parameterized strings we presented in this paper may be used to design more space-efficient algorithms for parameterized matching, as Galil and Seiferas~\cite{GS} used prefix periods to design a constant-extra-space algorithm for exact matching.

\ifversion{lncs}{\bibliographystyle{splncs04}}
\ifversion{arxiv}{\bibliographystyle{plain}}
\bibliography{paper}

\end{document}